\theoremstyle{plain}
\newtheorem{proposition}{Proposition}
\newcommand{\bbeta}{{\mbox{\boldmath$\beta$}}}
\newcommand{\bepsilon}{{\mbox{\boldmath$\epsilon$}}}
\newcommand{\bgamma}{{\mbox{\boldmath$\gamma$}}}
\newcommand{\blambda}{{\mbox{\boldmath$\lambda$}}}
\newcommand{\bmu}{{\mbox{\boldmath$\mu$}}}
\newcommand{\btheta}{{\mbox{\boldmath$\theta$}}}
\newcommand{\bzeta}{{\mbox{\boldmath$\zeta$}}}
\newcommand{\bSigma}{{\mbox{\boldmath$\Sigma$}}}
\newcommand{\bx}{\textbf{x}}
\newcommand{\bc}{\textbf{c}}
\newcommand{\bC}{\textbf{C}}
\newcommand{\bY}{\textbf{Y}}
\newcommand{\bE}{\textbf{E}}
\newcommand{\one}{\textbf{1}}
\newcommand{\zero}{\textbf{0}}
\newcommand{\up}{\underline{p}}
\title{\vspace{-40pt}Global Tests for Smoothed Functions in Mean Field Variational Additive Models\vspace{-10pt}}
\author{Mark J. Meyer and Junyi Wei \\ Department of Mathematics and Statistics, Georgetown University \\ Washington, DC USA}
\date{}
\begin{document}

	\maketitle
	
\begin{abstract}
Variational regression methods are an increasingly popular tool for their efficient estimation of complex. Given the mixed model representation of penalized effects, additive regression models with smoothed effects and scalar-on-function regression models can be fit relatively efficiently in a variational framework. However, inferential procedures for smoothed and functional effects in such a context is limited. We demonstrate that by using the Mean Field Variational Bayesian (MFVB) approximation to the additive model and the subsequent Coordinate Ascent Variational Inference (CAVI) algorithm, we can obtain a form of the estimated effects required of a Frequentist test for semiparametric curves. We establish MFVB approximations and CAVI algorithms for both Gaussian and binary additive models with an arbitrary number of smoothed and functional effects. We then derive a global testing framework for smoothed and functional effects. Our empirical study demonstrates that the test maintains good Frequentist properties in the variational framework and can be used to directly test results from a converged, MFVB approximation and CAVI algorithm. We illustrate the applicability of this approach in a wide range of data illustrations.
\end{abstract}

\section{Introduction}\label{s:intro}

Variational approximations cover a useful class of algorithms for calculating, in a deterministic fashion, estimates from Bayesian posterior distributions. One such class, the Mean Field Variational Bayesian (MFVB) approximation, has been well studied with many standard statistical models derived for use in the class, see for example \cite{OrmerodWand2010} or \cite{Blei2017}. Recent research suggests that estimates arising from MFVB share similar properties with their Frequentist counter-parts including the consistency and the asymptotic normality of variational estimators \citep{Bickel2013,You2014,Wang2019,Westling2019,YangPati2020,Zhang2020}. Such evaluations demonstrate that variational models have good Frequentist properties despite originating from a Bayesian estimation paradigm. 

One understudied area, however, is in hypothesis testing. While percentile-based intervals can be constructed for parameters based on the conditional posterior, underestimation of the variance can result in misleading inference---particularly for non-normally distributed posteriors \citep{Gelman2013, Giordano2018}. Some authors have proposed a bootstrap-based approach, see \cite{Chen2018}, while others convert back to a fully Bayesian model when inference is of interest, for example \cite{Goldsmith2016}. For complicated variational regression models like additive models, performing both the modeling and inference in the same framework would be preferable to a more computationally burdensome approach of bootstrapping or running a separate, fully Bayesian estimation procedure. One such model is the additive model which typically implies a regression model with a scalar outcome and an additive smoothed or semiparametric effect or potentially a set of such effects in addition to scalar covariates

Additive models with smoothed effects have been broadly studied, see texts by \cite{Ruppert2003} or \cite{Wood2017}, for example. Variational Bayesian approaches that incorporate smoothing include work by \cite{LutsWand2015, Lee2016, Wand2017, Hui2019,Yang2024} and references therein. Methods that incorporate both smoothed and functional effects into non-variational additive models include the work of \cite{McLean2014, Scheipl2015, ScheiplGertheiss2016, Li2016, Ma2021}, among others. However, to our knowledge, there is no work examining variational additive models that accommodate an arbitrary number of both smoothed and functional effects. And while there is work on global testing for smoothed effects, see for example \cite{Zhang2000,Zhang2003,Crainiceanu2005,Greven2008,Wood2017}, such  tests have not been considered in the variational setting. 

Global tests of the functional components in regression models are also limited. Much of the work has emphasized point-wise testing, see for example \cite{Malloy2010, Goldsmith2013} and \cite{Meyer2015}. Some authors have proposed global tests including \cite{Meyer2015} who examine an approach in a fully Bayesian function-on-function setting based on what the authors refer to as simultaneous band scores. Another example is \cite{Ivanescu2015} who note the connection between penalized functional effects and smoothed effects and suggest performing inference using an approach similar to \cite{Crainiceanu2004} or \cite{Greven2008}. Neither authors explore the operating characteristics of their  global tests for functional effects and such testing in the variational context has not been extensively studied.

In this manuscript, we explore MFVB approximations to variational additive models for both Gaussian and binary outcomes, developing Coordinate Ascent Variational Inference (CAVI) algorithms to perform estimation (Section~\ref{s:frame}). These models can incorporate an arbitrary number of both smoothed and functional effects and thus include semiparametric regression models (in the \cite{Ruppert2003} sense), scalar-on-function regression models (of the form described by \cite{Ramsay1991} and \cite{Ramsay2005}), and the combination of the two. The target of inference we consider is either a smoothed effect or a functional effect, both of which can be represented as smoothed effects. \cite{Zhang2000} and \cite{Zhang2003} propose a Frequentist-based test for semiparametric curves which we refer to as the ZLS test. We demonstrate that the CAVI algorithms for fitting MFVB approximations to both Gaussian and binary variational additive models admit forms required for the ZLS test which, we argue, can be used for a global test (Section~\ref{s:test}). We explore the proprieties of the ZLS test, including the type I error rate and power, for global inference on both smoothed and functional effects arising from converged CAVI algorithms (Section~\ref{s:sim}). In Section~\ref{s:app}, we demonstrate the use of the ZLS test in several data examples. Finally, we provide a discussion of the method in Section~\ref{s:disc}.

\section{Statistical Framework}\label{s:frame}

\subsection{Mean Field Variational Bayesian Approximations}

We begin with a brief overview of the MFVB approximation. Let $\btheta$ represent a vector of parameters and $\bY$ represent a vector of observed data. The posterior distribution of $\btheta$ given the data, $\bY$, is then $p(\btheta | \bY) = p(\bY, \btheta)\big/p(\bY)$. The marginal distribution in the denominator, $p(\bY)$, is typically intractable. Consequently, the posterior estimates must be obtained algorithmically. In a fully Bayesian approach, this would involve a Markov Chain Monte Carlo simulation or some other iterative random sampling technique like Hamiltonian Monte Carlo, for example. The MFVB approach utilizes a density transformation approach that relies on the observation that, for an arbitrary density $q$, $p(\bY)$ is bounded below by $\up(\bY; q)$ where $\up(\bY; q) = \exp\left[ \int q(\btheta) \log\left\{ \frac{p(\bY, \btheta)}{q(\btheta)} \right\} d\btheta \right].$ The algorithms used to determine the approximations work by maximizing $\up(\bY; q)$ over a class of tractable densities, $q$. As $\up(\bY; q)$ is maximized, the Evidence Lower Bound (ELBO) will be minimized. The ELBO is a metric measuring the distance between the approximation, $q(\btheta)$, and the posterior, $p(\btheta | \bY)$, and is an equivalent, up to an additive constant, to the Kullback-Leibler or K-L Divergence. 
	
	For a partition of the parameter space $\btheta$ of size $M$, $\{\btheta_1, \ldots, \btheta_M\}$, the MFVB approximation constructs $q(\btheta)$ using a product density transformation of the form $q(\btheta) = \prod_{m = 1}^M q_m(\btheta_m)$. Thus the resulting approximation is the product of the $q$-densities which are similar to the conditional posterior densities in a Gibbs sampling framework \citep{Gelman2013}. Optimization occurs via an iterative process with convergence defined to occur when changes in the variational lower bound, $\up(\bY; q)$, become minimal. Minimizing the changes in $\up(\bY; q)$ in turn minimizes the ELBO, and therefore the KL divergence, between $q(\btheta)$ and $p(\btheta | \bY)$. To perform this optimization, the coordinate ascent variational inference (CAVI) algorithm is commonly used. \cite{OrmerodWand2010} and \cite{Blei2017} provide detailed reviews and introductions to variational Bayesian techniques. Chapter 13 of \cite{Gelman2013} also gives a good introduction to the topic. While there are other variational approaches to approximating the posterior, our focus is on MFVB approximations obtained via CAVI. The relationship to the Gibbs sampling framework, where one component of $\btheta$ is updated at a time, means the algorithm will ultimately admit a useful form for the global test.

\subsection{Variational Additive Models}
\label{ss:vam}

\subsubsection{Gaussian Outcome}
\label{sss:gaus}

Let $y_i$ be the continuous outcome of interest for subjects $i = 1, \ldots, n$. Each subject $i$ has a scalar covariate response vector, $\bx_i$, that is $p\times1$. The intercept, $\alpha$, and vector of non-smoothed, non-functional coefficients is $\bbeta$ which is also $p\times 1$. Finally, we denote the model errors with $\epsilon_i$ and assume $\epsilon_i \stackrel{iid}{\sim} N(0, \sigma_e^2)$. We place similar priors on $\alpha$, $\bbeta$, and $\sigma_e^2$ regardless of the other model components. Thus, $\alpha \sim N(0, \sigma_a^2)$, $\bbeta \sim N(0, \sigma_b^2 I_{p\times p})$, and $\sigma_e^2 \sim IG(a_e, b_e)$ where $\sigma_a^2$ and $\sigma_b^2$ are fixed and set to something large, $I_{p\times p}$ denotes a $p\times p$ identity matrix, and $IG$ denotes the inverse gamma distribution. We take $a_e$ and $b_e$ to be fixed as well and set them to something small, 0.01 for example.

For $M$ total smoothed effects and $F$ total functional effects, a general additive model might have the form
\begin{align}
	y_i = \alpha &+ \bbeta'\bx_i + \sum_{m = 1}^M s_m(z_{im}) \nonumber\\
	&+ \sum_{f = 1}^F \int_{t\in\mathcal{T}} w_{if}(t)\gamma_f(t) dt + \epsilon_i, \label{eq:vam}
\end{align}
where $s_m(z_{im})$ is a smoothed effect and $\gamma_f(t)$ is a functional effect. The functional effects, as written, are over the same time domain although this does not have to be the case, i.e. $t$ and $\mathcal{T}$ can vary with $f$. The covariates $z_{im}f$ and $w_{1f}(t)$ are not part of $\bx_i$ which is a vector of scalar covariates. The vector $\bbeta$ contains the scalar effects and $\alpha$ is the model intercept.

Using basis expansions to represent each smoothed and functional effect, the matrix version of Equation~\eqref{eq:vam} is
\begin{align}
	\bY = \alpha\one &+ X\bbeta + \sum_{m=1}^M \Xi_m\bzeta_m \nonumber\\
	&+ \sum_{f=1}^F W_f\Theta_f\blambda_f + \bepsilon, \label{eq:matvam}
\end{align}
where $\bY$ and $\bepsilon$ are both $n\times 1$ vectors and $X$ is an $n\times (P-1)$ matrix of scalar effects. Considering the smoothed model components first: for $K$ knots, $\Xi_m$ is the $n\times K$ matrix containing the basis-expanded representation of the $z_{im}$, $i = 1, \ldots, n$, and $\bzeta_m$ is the $K\times 1$ vector of basis coefficients. For the functional components, let $\Theta_f$ be a $T\times L$ matrix of basis functions and let $\blambda$ be a $L\times1$ vector of basis coefficients. Then $W_f$ is the $n\times T$ matrix containing the functional covariate. Both smoothed and functional effects use a B-spline basis expansion. Under empirical testing, we found that $K = 8$ knots works well for smoothed effects and $L = 12$ works best for functional effects. In Equation~\eqref{eq:matvam}, we use the basis expansion $\bgamma = \Theta\lambda$ where $\bgamma$ is the $T\times1$ vector of functional coefficients, $\gamma(t)$.  \cite{Goldsmith2011} implement a similar approach for variational scalar-on-function regression. 

All basis expanded effects, both smoothed and functional, require penalization to avoid overfitting. It is well known that penalized regression models can be represented as mixed models \citep[Chapter 4]{Ruppert2003}. In the Frequentist context, this means the components one wishes to penalize are treated as random effects. However, there is no distinction between random and fixed effects in the Bayesian context as any unknown parameter is treated as if it is random and has a prior placed on it. Thus, the equivalent is a prior specification that introduces the penalty. Because of the relationship between mixed effects models and penalized smoothing, we can leverage existing algorithms for mean field variational mixed effect models to estimate Equation~\eqref{eq:matvam}. \cite{OrmerodWand2010} provide one such algorithm which forms the basis of our estimation procedure for the variational additive model.


We place a mean-zero shrinkage prior on the elements of $\bzeta_m$ of the form $\bzeta_m \sim N(\zero, \omega_m\mathcal{P}_m)$ where $\omega_m$ is a tuning parameter and $\zero$ is a $K\times1$ vector of zeros. The matrix $\mathcal{P}_m$ is a $K\times K$ penalty matrix which is the difference operator in matrix form, see \cite{Eilers1996} for more details. The prior on the basis expanded functional effects is similar: $\blambda_f \sim N(\zero, \eta_f\Delta_f)$ where $\eta_f$ is a tuning parameter and $\zero$ is an $L\times 1$ vector of zeros. We distinguish the penalty matrix for the functional effect, $\Delta_f$, because it has a different form than in the smoothed case: $\Delta_f = \xi \Delta_0 + (1 - \xi)\Delta_2$ where $\Delta_0$ is the zeroth derivative matrix and $\Delta_2$ is the second derivative matrix of the B-spline basis function. This penalty matrix is consistent with other penalized functional regression settings including \cite{Goldsmith2016} and \cite{MeyerMorris2022}.


Regardless of effect type, the tuning parameters need to be estimated to induce shrinkage. Thus, we place conditionally conjugate inverse-gamma priors on all $\omega_m$ and $\eta_f$: $\omega_m \sim IG(a_{\omega, m}, b_{\omega, m})$ and $\eta_m \sim IG(a_{\eta, m}, b_{\eta, m})$. We set the hyper-parameters $a_{\omega, m}, b_{\omega, m}, a_{\eta, m},$ and $b_{\eta, m}$, to something small, 0.01 or less. Since these parameters are tuning parameters, it is reasonable to place a weakly informative prior on each. However, our formulation does allow for differing amounts of information to passed to different components when the model contains multiple smoothed and functional effects. The Directed Acyclic Graph (DAG) describing the Markov Blanket for this model is in Figure~\ref{f:dagg}.

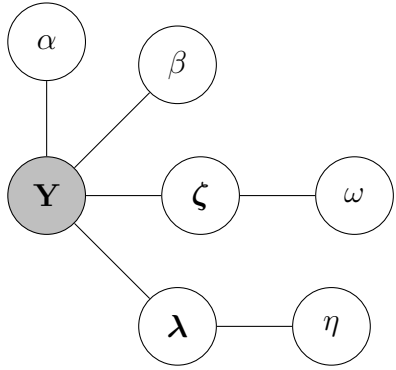
\begin{figure}
\centering
\begin{tikzpicture}
    \node[state,fill=lightgray] (y) at (0,0) {$\bY$};



    \node[state] (a) [above = of y] {$\alpha$};

    \node[state] (b) [above right  = of y] {$\beta$};

    \node[state] (z) [right = of y] {$\bzeta$};

    \node[state] (l) [below right = of y] {$\blambda$};
    
    \node[state] (o) [right = of z] {$\omega$};

    \node[state] (e) [right = of l] {$\eta$};
    
    \path (a) edge (y);
    \path (b) edge (y);
    \path (z) edge (y);
    \path (l) edge (y);
    \path (o) edge (z);
    \path (e) edge (l);

\end{tikzpicture}
\caption{DAG for Gaussian variational additive model\label{f:dagg}}
\end{figure}

Combining the above model specifications, the resulting $q$ densities for MFVB approximation are
\begin{align*}
	q(\btheta) &\sim N\left[\bmu_{q(\btheta)}, \bSigma_{q(\btheta)}\right]\\ 
	q(\sigma_e^2) &\sim IG\left[ a_e + \frac{N}{2}, B_{q(\sigma_e^2)}, \right]\\ 
	 q(\omega_{1}) &\sim IG\left[ a_{\omega, 1} + \frac{K}{2}, B_{q(\omega_{1})}\right],\\
	 \cdots\\
	 q(\omega_{M}) &\sim IG\left[ a_{\omega, M} + \frac{K}{2}, B_{q(\omega_{M})}\right],\\
	 q(\eta_1) &\sim IG\left[ a_{\eta,1} + \frac{L}{2}, B_{q(\eta_1)}\right],\\
	 \cdots\\
	 q(\eta_F) &\sim IG\left[ a_{\eta,F} + \frac{L}{2}, B_{q(\eta_F)}\right],
\end{align*}
depending on the number of smoothed ($M$) and functional ($F$) components in the model. The subscript-${q(\cdot)}$ notation indicates the parameter to which the quantity belongs under the mean field approximation. The vector $\btheta$ contains all mean model parameters, i.e. $\btheta = \left[\begin{array}{cccccccc} \alpha & \bbeta & \bzeta_1 & \cdots & \bzeta_M & \blambda_1 & \cdots & \blambda_F \end{array} \right]'$. Let $\bC$ denote the design matrix for Equation~\eqref{eq:matvam}, that is $\bC = \left[ \begin{array}{cccccccc} \one & X & \Xi_1 & \cdots & \Xi_M & W_1 & \cdots & W_F \end{array} \right]$. Algorithm~\ref{a:vam} describes the estimation procedure. Convergence is determined when changes in $\log[ \up(\bY; q) ]$ become negligible, where
	\begin{align*}
		\log[ &\up(\bY; q) ] = \frac{1}{2}(P + K + L) - \frac{N}{2}\log(2\pi) \\
		&- \frac{P}{2}\log(\sigma_b^2) + \frac{1}{2}\log\left(|\bSigma_{q(\btheta)}|\right) \\
			&- \frac{1}{2\sigma_b^2}\left[ {\bmu_{q(\alpha, \bbeta)}}'\bmu_{q(\alpha, \bbeta)} + \text{tr}\left\{\bSigma_{q(\alpha, \bbeta)}\right\} \right] \\
			&- a_e\log(b_e) - \left(a_e + \frac{N}{2}\right)\log(B_{q(\sigma^2)}) \\
			&+ \log\left\{\Gamma\left(a_e + \frac{N}{2}\right)\right\} - \log\left\{\Gamma(a_e)\right\} \\
			&+ \sum_{m = 1}^M \bigg[a_{\omega,m}\log(b_{\omega,m}) \\
			&\hspace{30pt}- \left(a_{\omega,m} + \frac{K}{2}\right)\log(B_{q(\omega_m)}) \\
			&\hspace{30pt}+ \log\left\{\Gamma\left(a_{\omega,m} + \frac{K}{2}\right)\right\} \\
			&\hspace{30pt}- \log\{\Gamma(a_{\omega,m})\}\bigg]\\
			&+ \sum_{f = 1}^F \bigg[a_{\eta,f}\log(b_{\eta,f}) \\
			&\hspace{30pt}- \left(a_{\eta,f} + \frac{L}{2}\right)\log(B_{q(\eta_f)}) \\
			&\hspace{30pt}+ \log\left\{\Gamma\left(a_{\eta,f} + \frac{L}{2}\right)\right\} \\
			&\hspace{30pt}- \log\{\Gamma(a_{\eta,f})\}\bigg].
	\end{align*}
	The algorithm is written generally and can be used to estimate parameters from Equation~\eqref{eq:vam} for an arbitrary number of smoothed and functional effects.

\begin{algorithm*}
\caption{Estimate Equation~\eqref{eq:vam} components. The abbreviations `bd' is for block diagonal.}\label{a:vam}
\begin{algorithmic}[1]
\Require $B_{q(\omega_1)}, \ldots, B_{q(\omega_M)}, B_{q(\eta_1)}, \ldots, B_{q(\eta_F)} > 0$, and $\epsilon > 0$, small
	\While{$\Delta \log[ \up(\bY; q) ] > \epsilon$}
		\State $\mathcal{D} \gets \text{bd}\left\{ (\sigma_b^2)^{-1} I_{p\times p}, \frac{a_{\omega,1} + \frac{K}{2} }{B_{q(\omega_1)}} \mathcal{P}_1, \cdots, \frac{a_{\omega,M} + \frac{K}{2} }{B_{q(\omega_M)}} \mathcal{P}_M,\frac{a_{\eta,1} + \frac{L}{2} }{B_{q(\eta_1)}} \Delta_1, \cdots, \frac{a_{\eta,F} + \frac{L}{2} }{B_{q(\eta_F)}} \Delta_M  \right\}$
		\State $\bSigma_{q(\btheta)} \gets \left[ \frac{a_e + \frac{N}{2}}{B_{q(\sigma_e^2)}}\bC'\bC + \mathcal{D}\right]^{-1}$
		\State $\bmu_{q(\btheta)} \gets \left( \frac{a_e + \frac{N}{2}}{B_{q(\sigma_e^2)}}\right)\bSigma_{q(\btheta)}\bC'\bY $ 
		\State  $B_{q(\sigma^2)} \gets b_e + \frac{1}{2}\left[ \left\{\bY - \bC\bmu_{q(\btheta)} \right\}'\left\{\bY - \bC\bmu_{q(\btheta)} \right\} + \text{tr}\left\{ \bC'\bC \right\}\bSigma_{q(\btheta)} \right]$
		\If{$M \geq 1$}
			\Loop $\ m = 1, \ldots, M$
				\State $B_{q(\omega_m)} \gets b_ {\omega,m} + \frac{1}{2}\left[ {\bmu_{q(\bzeta_m)}}'\bmu_{q(\bzeta_m)} + \text{tr}\left\{ \frac{a_{\omega,m} + \frac{K}{2} }{B_{q(\omega_m)}} \mathcal{P}_m \right\} \right]$
			\EndLoop
		\EndIf
		\If{$F \geq 1$}
			\Loop $\ f = 1, \ldots, F$
				\State $B_{q(\eta_f)} \gets b_ {\eta,f} + \frac{1}{2}\left[ {\bmu_{q(\blambda_f)}}'\bmu_{q(\blambda_f)} + \text{tr}\left\{ \frac{a_{\eta,f} + \frac{L}{2} }{B_{q(\eta_f)}} \Delta_f \right\} \right]$
			\EndLoop
		\EndIf
\EndWhile
\end{algorithmic}
\end{algorithm*}

\subsubsection{Binary Outcome}
\label{sss:bin}

Suppose instead that $y_i$ is a binary outcome for subjects $i = 1,\ldots,n$. As in the Gaussian case, $\bx_i$ is a vector of scalar covariates that is $p\times1$, $\alpha$ is the intercept, and $\bbeta$ is $p\times1$ vector where both $\alpha$ and $\bbeta$ are non-smoothed, non-functional effects. Prior specification for these components are the same as in Section~\ref{sss:gaus}. To model the binary outcome, we use the latent variable representation which has the form
\begin{align}
	y_i = \left\{\begin{array}{cc}
			0 & y_i^* < 0\\
			1 & y_i^* \geq 0,
		\end{array}\right.\label{eq:latent}
\end{align}
for the latent variable $y_i^* \in \mathbb{R}$. That is, the binary outcome $y_i$ only represents the observable part of a continuous, underlying process. We use the model from Equation~\eqref{eq:vam} in the latent space:
\begin{align}
	y_i^* = \alpha &+ \bbeta'\bx_i + \sum_{m = 1}^M s_m(z_{im}) \nonumber\\
	&+ \sum_{f = 1}^F \int_{t\in\mathcal{T}} w_{if}(t)\gamma_f(t) dt + \epsilon_i^*. \label{eq:lvm}
\end{align}
Assuming that the latent errors, $\epsilon_i^*$, are Gaussian induces the probit link, i.e. $\epsilon_i^* \stackrel{iid}{\sim} N(0, 1)$. Since the representation in Equation~\eqref{eq:latent} is scale-invariant, the variance of the latent variable is taken to be 1, see \cite{AlbertChib1993} for additional details on the Bayesian probit model.

Because we model in the latent space, the formulation follows analogously to that in Section~\ref{sss:gaus}. Thus, the vectorized version of the latent model is
\begin{align*}
	\bY^* = \alpha\one &+ X\bbeta + \sum_{m=1}^M \Xi_m\bzeta_m \\
	&+ \sum_{f=1}^F W_f\Theta_f\blambda_f + \bepsilon^*, 
\end{align*}
where $\bY^*$ and $\bepsilon^*$ are both $n\times1$ vectors and the remaining model components are as previously defined. We place the same mean-zero shrinkage priors and hyper-priors on the probit model as we do for the Gaussian model: $\bzeta_m \sim N(\zero, \omega_m\mathcal{P}_m)$, $\blambda_f \sim N(\zero, \eta_f\Delta_f)$, $\omega_m \sim IG(a_{\omega, m}, b_{\omega, m})$, and $\eta_m \sim IG(a_{\eta, m}, b_{\eta, m})$. We also set the hyper-parameters $a_{\omega, m}, b_{\omega, m}, a_{\eta, m},$ and $b_{\eta, m}$, to something small, 0.01 or less for this model as well.

The addition of the latent variable, $y_i^*$, to model makes this model tractable and the resulting full conditionals are fully identifiable. Thus, the CAVI algorithm is a reasonable approach to performing estimation in the variational context. The $q$ densities that result from the MFVB approximation are
\begin{align*}
	q(\btheta) &\sim N\left[\bmu_{q(\btheta)}, \bSigma_{q(\btheta)}\right]\\ 
	q(\bY^*|\bY = 0) &\sim N_{(-\infty,0)}\left[\bmu_{q(\bY^*)}, I_{n_0\times n_0}\right]\\
	q(\bY^*|\bY = 1) &\sim N_{[0,\infty)}\left[\bmu_{q(\bY^*)},I_{n_1\times n_1}\right]\\
	 q(\omega_{1}) &\sim IG\left[ a_{\omega, 1} + \frac{K}{2}, B_{q(\omega_{1})}\right],\\
	 \cdots\\
	 q(\omega_{M}) &\sim IG\left[ a_{\omega, M} + \frac{K}{2}, B_{q(\omega_{M})}\right],\\
	 q(\eta_1) &\sim IG\left[ a_{\eta,1} + \frac{L}{2}, B_{q(\eta_1)}\right],\\
	 \cdots\\
	 q(\eta_F) &\sim IG\left[ a_{\eta,F} + \frac{L}{2}, B_{q(\eta_F)}\right],
\end{align*}
where $n_0$ is the number of failures and $n_1$ is the number of successes observed in $y_i$. The distributions $q(\bY^*|\bY = 0)$ and $q(\bY^*|\bY = 1)$ are conditional on the components of $\bY$ that equal zero and one, respectively, and the subscripts $(-\infty,0)$ and $[0,\infty)$ denote truncation of the distribution to those ranges. Thus, $q(\bY^*|\bY = 0)$ and $q(\bY^*|\bY = 1)$ are truncated normal distributions. Because the Markov blanket for the penalized components of the binary model are the same as in the Gaussian model, the $q$ densities for the penalty terms are the same in both, see Figure~\ref{f:dagp} which contains the DAG for the binary model.

\begin{figure}
\centering
\begin{tikzpicture}
    \node[state] (ys) at (0,0) {$\bY^*$};

    \node[state,fill=lightgray] (y) [below left = of ys] {$\bY$};


    \node[state] (a) [above = of ys] {$\alpha$};

    \node[state] (b) [above right  = of ys] {$\beta$};

    \node[state] (z) [right = of ys] {$\bzeta$};

    \node[state] (l) [below right = of ys] {$\blambda$};
    
    \node[state] (o) [right = of z] {$\omega$};

    \node[state] (e) [right = of l] {$\eta$};

    \path (ys) edge (y);
    \path (a) edge (ys);
    \path (b) edge (ys);
    \path (z) edge (ys);
    \path (l) edge (ys);
    \path (o) edge (z);
    \path (e) edge (l);

\end{tikzpicture}
\caption{DAG for probit variational additive model\label{f:dagp}}
\end{figure}
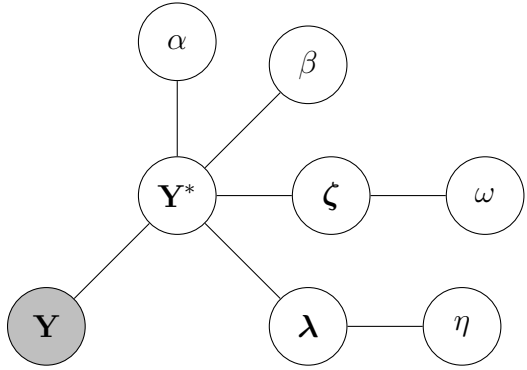

Algorithm~\ref{a:bvam} describes the CAVI-based estimation approach. Convergence is determined when changes in $\log[ \up(\bY; q) ]$ become negligible:
	\begin{align*}
		\log[ &\up(\bY; q) ] = \bY'\log[\Phi\{ \bC\bmu_{q(\btheta)} \} ] \\
		&+ (\one - \bY)'\log[\one - \Phi\{ \bC\bmu_{q(\btheta)} \} ] - \frac{1}{2}\log\left| \bSigma_{q(\btheta)}\right|  \\
			&- \frac{1}{2\sigma_b^2}\left[ {\bmu_{q(\alpha, \bbeta)}}'\bmu_{q(\alpha, \bbeta)} + \text{tr}\left\{\bSigma_{q(\alpha, \bbeta)}\right\} \right] \\
			&+ \sum_{m = 1}^M \bigg[a_{\omega,m}\log(b_{\omega,m}) \\
			&\hspace{30pt}- \left(a_{\omega,m} + \frac{K}{2}\right)\log(B_{q(\omega_m)}) \\
			&\hspace{30pt}+ \log\left\{\Gamma\left(a_{\omega,m} + \frac{K}{2}\right)\right\} \\
			&\hspace{30pt}- \log\{\Gamma(a_{\omega,m})\}\bigg]\\
			&+ \sum_{f = 1}^F \bigg[a_{\eta,f}\log(b_{\eta,f}) \\
			&\hspace{30pt}- \left(a_{\eta,f} + \frac{L}{2}\right)\log(B_{q(\eta_f)}) \\
			&\hspace{30pt}+ \log\left\{\Gamma\left(a_{\eta,f} + \frac{L}{2}\right)\right\} \\
			&\hspace{30pt}- \log\{\Gamma(a_{\eta,f})\}\bigg],
	\end{align*}
where $\Phi()$ denotes the cdf of a standard Gaussian distribution. The algorithm can accommodate an arbitrary number of smoothed and functional effects when fitting the binary model described by Equations~\eqref{eq:latent} and~\eqref{eq:lvm}.

\begin{algorithm*}
\caption{Estimate components from Equations~\eqref{eq:latent} and~\eqref{eq:lvm}. The abbreviations `bd' is for block diagonal.  $\Phi()$ denotes the cdf of a standard Gaussian distribution and $\phi()$ denotes the pdf.}\label{a:bvam}
\begin{algorithmic}[1]
\Require $B_{q(\omega_1)}, \ldots, B_{q(\omega_M)}, B_{q(\eta_1)}, \ldots, B_{q(\eta_F)} > 0$, and $\epsilon > 0$, small. Set $\bmu_{q(\bY^*)}$ to something reasonable, e.g. \zero.
	\While{$\Delta \log[ \up(\bY; q) ] > \epsilon$}
		\State $\mathcal{D} \gets \text{bd}\left\{ (\sigma_b^2)^{-1} I_{p\times p}, \frac{a_{\omega,1} + \frac{K}{2} }{B_{q(\omega_1)}} \mathcal{P}_1, \cdots, \frac{a_{\omega,M} + \frac{K}{2} }{B_{q(\omega_M)}} \mathcal{P}_M,\frac{a_{\eta,1} + \frac{L}{2} }{B_{q(\eta_1)}} \Delta_1, \cdots, \frac{a_{\eta,F} + \frac{L}{2} }{B_{q(\eta_F)}} \Delta_M  \right\}$
		\State $\bSigma_{q(\btheta)} \gets \left[ \bC'\bC + \mathcal{D}\right]^{-1}$
		\State $\bmu_{q(\btheta)} \gets \bSigma_{q(\btheta)}\bC'\bY $ 
		\State $\bmu_{q(\bY^*)} \gets \bC'\bmu_{q(\btheta)} + \phi\left[\bC'\bmu_{q(\btheta)}\right] \bigg/\Phi\left[\bC'\bmu_{q(\btheta)}\right]^{\bY}\left[\Phi\left\{\bC'\bmu_{q(\btheta)}\right\} - \one\right]^{(\one - \bY)}$
		\If{$M \geq 1$}
			\Loop $\ m = 1, \ldots, M$
				\State $B_{q(\omega_m)} \gets b_ {\omega,m} + \frac{1}{2}\left[ {\bmu_{q(\bzeta_m)}}'\bmu_{q(\bzeta_m)} + \text{tr}\left\{ \frac{a_{\omega,m} + \frac{K}{2} }{B_{q(\omega_m)}} \mathcal{P}_m \right\} \right]$
			\EndLoop
		\EndIf
		\If{$F \geq 1$}
			\Loop $\ f = 1, \ldots, F$
				\State $B_{q(\eta_f)} \gets b_ {\eta,f} + \frac{1}{2}\left[ {\bmu_{q(\blambda_f)}}'\bmu_{q(\blambda_f)} + \text{tr}\left\{ \frac{a_{\eta,f} + \frac{L}{2} }{B_{q(\eta_f)}} \Delta_f \right\} \right]$
			\EndLoop
		\EndIf
\EndWhile
\end{algorithmic}
\end{algorithm*}

\section{Variational ZLS Tests}\label{s:test}

\cite{Zhang2000} and \cite{Zhang2003} discuss hypothesis testing for comparing two or more semiparametric curves. The test statistic, which we refer to as the ZLS test, requires the existence of a vector, $\bc(z)$, such that the estimate of a sermiparametric function, $s(z)$, can be written as $\hat{s}(z) = \bc(z)'\bY$ at some value $z \in [\min_i(z_i), \max_i(z_i)]$.  We develop the testing framework first for the estimated variational smoothed functions and then for variational scalar-on-function regression curves. Without loss of generality, each test is formulated for a generic effect and we will perform the tests one at a time.

\subsection{Smoothed Effects}

Under the null and regardless of outcome type, we assume the smoothed component has no effect, i.e. $H_0: s(z) = 0\ \forall\ z$. The alternative is then that there is a non-zero effect for some values of $z$. To test this null, we derive a ZLS-like test which requires a vector $\bc(z)$:\\

\begin{proposition}
	Upon convergence of Algorithm~\ref{a:vam}, there exists a vector $\bc(z)$ such that $\hat{s}(z) = \bc(z)'Y$.
\end{proposition}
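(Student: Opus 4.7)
The plan is to exhibit $\bc(z)$ explicitly from the converged CAVI updates. The key structural fact is that in Algorithm~\ref{a:vam}, line 5 gives
\[
\bmu_{q(\btheta)} = \left(\frac{a_e + \tfrac{N}{2}}{B_{q(\sigma_e^2)}}\right)\bSigma_{q(\btheta)}\,\bC'\bY,
\]
which is manifestly linear in $\bY$ once the scalar $B_{q(\sigma_e^2)}$, the matrix $\bSigma_{q(\btheta)}$ (and hence, implicitly, all the $B_{q(\omega_m)}$ and $B_{q(\eta_f)}$) are held at their converged values. So the first step in my proof would be to fix notation: set
\[
A \;=\; \left(\frac{a_e + \tfrac{N}{2}}{B_{q(\sigma_e^2)}^\star}\right)\bSigma_{q(\btheta)}^\star\,\bC',
\]
where the $\star$ denotes the values attained at convergence of Algorithm~\ref{a:vam}, and observe that $\bmu_{q(\btheta)} = A\bY$.

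Next I would connect the coefficient estimate $\bmu_{q(\bzeta_m)}$ to the estimated smoothed curve. Since $s_m(z)$ was represented through a $K$-dimensional B-spline basis in the construction of $\Xi_m$, let $\bb_K(z)$ denote that basis vector evaluated at the point $z$; then the plug-in estimator is $\hat{s}_m(z) = \bb_K(z)'\bmu_{q(\bzeta_m)}$. Because $\btheta$ is stacked as $(\alpha, \bbeta, \bzeta_1, \ldots, \bzeta_M, \blambda_1, \ldots, \blambda_F)'$, there is a fixed $K \times \dim(\btheta)$ selection matrix $E_m$ (zeros and ones) that picks off the $\bzeta_m$-block, so $\bmu_{q(\bzeta_m)} = E_m\bmu_{q(\btheta)} = E_m A\bY$.

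Combining these two steps yields
\[
\hat{s}_m(z) \;=\; \bb_K(z)'E_m A\bY \;=\; \bc(z)'\bY, \qquad \bc(z) \;=\; A'E_m'\bb_K(z),
\]
which is the required representation. The argument is purely algebraic: the last thing to check is simply that $\bc(z)$ depends on $z$ only through $\bb_K(z)$ and is well-defined because $\bSigma_{q(\btheta)}^\star$ exists (the penalty block $\mathcal{D}$ ensures $\bC'\bC + \mathcal{D}$ is invertible, as already used inside Algorithm~\ref{a:vam}).

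The only real subtlety, and the one place I would pause in the write-up, is that $B_{q(\sigma_e^2)}^\star$ and $\bSigma_{q(\btheta)}^\star$ are themselves functionals of $\bY$ through the CAVI recursion, so strictly speaking $\hat{s}_m(z)$ is not a linear map of $\bY$. The resolution mirrors the standard treatment in \cite{Zhang2000,Zhang2003}: the ZLS framework only requires a vector $\bc(z)$ such that the fitted curve admits the representation $\bc(z)'\bY$ at the estimated smoothing parameters, with $\bc(z)$ allowed to depend on those plug-in estimates. I would make this conditioning explicit, noting that after convergence the penalty-related quantities are treated as fixed tuning constants, and the residual bilinear structure in $\bY$ is exactly the form required to carry the ZLS test over to the MFVB setting.
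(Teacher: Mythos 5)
Your proposal is correct and follows the same basic route as the paper: both read the required linear-in-$\bY$ representation directly off the converged mean update $\bmu_{q(\btheta)} = \bigl(\tfrac{a_e + N/2}{B_{q(\sigma_e^2)}}\bigr)\bSigma_{q(\btheta)}\bC'\bY$ and then extract the $\bzeta$-block. There are two differences worth noting. First, your $\bc(z) = A'E_m'\bb_K(z)$ keeps the \emph{entire} row-block of $\bSigma_{q(\btheta)}$ corresponding to $\bzeta_m$, including its cross-covariances with $\alpha$, $\bbeta$, and the other smoothed and functional coefficients, multiplied against the full design $\bC'$; consequently $\bc(z)'\bY$ reproduces the algorithm's fitted value $\bb_K(z)'\bmu_{q(\bzeta_m)}$ exactly. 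The paper instead states $\bc(z)$ using only the principal submatrix $\bSigma_{q(\bzeta)}$ paired with $\Xi$, which silently drops those off-diagonal contributions — exact when the smoothed term is the only predictor, but an approximation otherwise. Your version is the more faithful reading of the claim as stated, while the paper's isolates the component's own smoother in the spirit of the ZLS construction. Second, you explicitly flag that $B_{q(\sigma_e^2)}$ and $\bSigma_{q(\btheta)}$ at convergence are themselves functionals of $\bY$, so the map is linear only conditionally on the plug-in variational quantities; the paper leaves this caveat implicit, and making it explicit (with the standard ZLS justification of conditioning on estimated smoothing parameters) strengthens rather than weakens the argument.
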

\begin{proof}
Let $\bSigma_{q(\bzeta)}$ denote the converged covariance matrix for the smoothed effect $\bzeta$ resulting from Algorithm~\ref{a:vam}. That is, $\bSigma_{q(\bzeta)}$ is the principal submatrix of $\bSigma_{q(\btheta)}$ that corresponds to $\bzeta$. Upon convergence, the quantity $\bc(z) = \left( \frac{a_e + \frac{N}{2}}{B_{q(\sigma_e^2)}}\right)\bSigma_{q(\bzeta)}\Xi$ is a vector such that $\hat{s}(z) = \bc(z)'\bY$.
\end{proof}

\cite{Zhang2003} describe the implementation of the ZLS test for non-Gaussian outcomes that formulates the vector $\bc(z)$ in terms of the working model. In the binary case, the working model is represented by the latent variable $\bY^*$.\\

\begin{proposition}
	Upon convergence of Algorithm~\ref{a:bvam}, there exists a vector $\bc(z)$ such that $\hat{s}(z) = \bc(z)'\bY^*$.
\end{proposition}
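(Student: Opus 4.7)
The plan is to mirror the argument of Proposition~1, adapting to the two features that distinguish the probit setting from the Gaussian one: the working response is the converged latent posterior mean $\bmu_{q(\bY^*)}$ in place of $\bY$, and the residual variance in the latent model is fixed at unity (see Section~\ref{sss:bin}), so no $\sigma_e^2$-scaling appears in the coefficient vector. First I would record that at the fixed point of Algorithm~\ref{a:bvam} the update for the mean of $\btheta$ reduces to $\bmu_{q(\btheta)} = \bSigma_{q(\btheta)}\bC'\bY^*$, with $\bY^*$ standing for the converged $\bmu_{q(\bY^*)}$. The key observation is that this expression is linear in the working response.

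Next I would extract the subvector corresponding to $\bzeta$. Taking $\bSigma_{q(\bzeta)}$ to be the principal submatrix of $\bSigma_{q(\btheta)}$ indexed by $\bzeta$, exactly as in the proof of Proposition~1, the subvector $\bmu_{q(\bzeta)}$ is then a linear function of $\bY^*$ whose coefficient matrix mirrors the Gaussian formula with the variance factor removed. Because the fitted smoothed effect satisfies $\hat{s}(z) = \Xi(z)\bmu_{q(\bzeta)}$, where $\Xi(z)$ is the row of basis evaluations at $z$, the choice $\bc(z) = \bSigma_{q(\bzeta)}\Xi$ (read in parallel to Proposition~1) yields $\hat{s}(z) = \bc(z)'\bY^*$, which is precisely the form required by the ZLS test of \cite{Zhang2003}.

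The main obstacle I anticipate is not algebraic but conceptual: justifying that $\bY^*$ is the appropriate working response for a variational analogue of the ZLS test. At convergence, $\bmu_{q(\bY^*)}$ is a deterministic function of the observed $\bY$ produced by the truncated-normal updates in Algorithm~\ref{a:bvam}, and \cite{Zhang2003} explicitly accommodate non-Gaussian outcomes by recasting the test in terms of a working vector. Once one identifies the variational latent posterior mean with that working vector, the remaining block-extraction step is essentially identical to the one used to prove Proposition~1, and the statement follows.
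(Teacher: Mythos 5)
Your proposal is correct and follows essentially the same route as the paper's proof: it identifies $\bSigma_{q(\bzeta)}$ as the principal submatrix of the converged $\bSigma_{q(\btheta)}$, observes that the mean update in Algorithm~\ref{a:bvam} is linear in the working response $\bY^*$ with the unit latent variance removing the $\sigma_e^2$-scaling, and arrives at the same choice $\bc(z) = \bSigma_{q(\bzeta)}\Xi$. The additional discussion you give of why the latent posterior mean plays the role of the working vector in the sense of \cite{Zhang2003} is a justification the paper leaves implicit, but it does not change the argument.
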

\begin{proof}
	Let $\bSigma_{q(\bzeta)}$ denote the converged covariance matrix for the smoothed effect $\bzeta$ resulting from Algorithm~\ref{a:bvam}. Thus, $\bSigma_{q(\bzeta)}$ is the principal submatrix of $\bSigma_{q(\btheta)}$ that corresponds to $\bzeta$. Upon convergence, the quantity $\bc(z) = \bSigma_{q(\bzeta)}\Xi$ is a vector such that $\hat{s}(z) = \bc(z)'\bY^*$.
\end{proof}

The statistic for the ZLS test of a variational smoothed effect is then
\begin{align}
	G_s(\bY) = \int_{z_{(1)}}^{z_{(N)}} \bY'\bc(z)\bc(z)'\bY dz = \bY'U \bY, \label{eq:semiZLS}
\end{align}
where $U = \int_{z_{(1)}}^{z_{(N)}} \bc(z)\bc(z)' dz$, $z_{(1)} = \min_i(z_i)$, and $z_{(N)} = \max_i(z_i)$. When $\bc(z)$ arises from the binary model in Algorithm~\ref{a:bvam}, we replace $\bY$ with $\bY^*$ in Equation~\eqref{eq:semiZLS}. Consistent with \cite{Zhang2000} and \cite{Zhang2003}, we approximate the distribution of $G(\bY)$ and $G_s(\bY^*)$ as a scaled chi-squared, using the Satterthwaite approximation \citep{Satterthwaite1943}. 

Under $H_0$, the mean and variance of both $G_s(\bY)$ and $G_s(\bY^*)$ are $e_s = \bE'U\bE + tr(UV)$ and $\psi_s = 2tr\left\{ (UV)^2 \right\} + 4\bE'UVU\bE$ where $\bE$ is the mean vector of $\bY$ (or $\bY^*$) and $V$ is the covariance matrix---in the binary case, $V$ is a working covariance matrix. Under the null, $U\bE$ is negligible, so we approximate these quantities with $e_s \approx tr(UV)$ and $\psi_s \approx 2tr\left\{ (UV)^2 \right\}$. Setting the approximate versions of $e_s$ and $\psi_s$ equal to the mean and variance of a scaled chi-squared, $\kappa_s \chi^2_{\nu_s}$, results in a scaling factor of $\kappa_s = \psi_s/(2e_s)$ and a degrees of freedom of $\nu_s = 2e_s^2/\psi_s$. The $p$-value for the test can be approximated by finding $Pr\left[\chi^2_{\nu_s} > G_s(\bY)/\kappa_s\right]$.

\subsection{Functional Effects}

We now extend the ZLS test to a functional effect. The null for this component is $H_0 : \gamma(t) = 0\ \forall\ t \in\mathcal{T} = \{t : t = t_1, \ldots, t_T\}$. We assume that the functional effect is measured on a grid that can be equally spaced, although it is not required to be. The test requires $\bc(t)$ such that $\hat{\gamma}(t) = \bc(t)'\bY$ for some time point $t \in \mathcal{T}$.\\

\begin{proposition}
	Upon convergence of Algorithm~\ref{a:vam}, there exists a vector $\bc(t)$ such that $\hat{\gamma}(t) = \bc(t)'\bY$.
\end{proposition}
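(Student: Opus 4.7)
The plan is to mimic the structure of the proofs of Propositions 1 and 2 for smoothed effects, with the extra step of accounting for the B-spline basis expansion of the functional coefficient. The key observation is that under the expansion $\bgamma = \Theta\blambda$ used in Equation~\eqref{eq:matvam}, the functional coefficient at a single time point $t$ is $\gamma(t) = \Theta(t)\blambda$, where $\Theta(t)$ denotes the $1 \times L$ row of $\Theta$ indexed by $t$. Consequently, the MFVB point estimate satisfies $\hat{\gamma}(t) = \Theta(t)\bmu_{q(\blambda)}$, and the task reduces to expressing $\bmu_{q(\blambda)}$ as a linear map applied to $\bY$.

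Next, I would invoke the converged update from Algorithm~\ref{a:vam}, namely
$\bmu_{q(\btheta)} = \left(\frac{a_e + N/2}{B_{q(\sigma_e^2)}}\right)\bSigma_{q(\btheta)}\bC'\bY$, and extract the subvector corresponding to $\blambda$. Following the notational convention used in Proposition 1, let $\bSigma_{q(\blambda)}$ denote the principal submatrix of $\bSigma_{q(\btheta)}$ corresponding to $\blambda$ and let $W\Theta$ denote the block of $\bC$ associated with $\blambda$. Then, in direct analogy with the expression obtained for $\bmu_{q(\bzeta)}$, we can write $\bmu_{q(\blambda)} = \left(\frac{a_e + N/2}{B_{q(\sigma_e^2)}}\right)\bSigma_{q(\blambda)}(W\Theta)'\bY$.

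Composing the two maps yields
$\hat{\gamma}(t) = \left(\frac{a_e + N/2}{B_{q(\sigma_e^2)}}\right)\Theta(t)\bSigma_{q(\blambda)}(W\Theta)'\bY$, which is exactly of the form $\bc(t)'\bY$ with
\begin{equation*}
\bc(t) = \left(\frac{a_e + N/2}{B_{q(\sigma_e^2)}}\right) W\Theta\,\bSigma_{q(\blambda)}\Theta(t)',
\end{equation*}
an $N \times 1$ vector, as required.

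The substantive work in the argument is just bookkeeping: verifying that the block extracted from $\bSigma_{q(\btheta)}\bC'$ really does combine with $\Theta(t)$ on the left to produce an $N \times 1$ vector after transposition, and verifying that the notational convention from Propositions 1 and 2 (writing $\bSigma_{q(\cdot)}$ for the principal submatrix that governs the dependence of the relevant coefficient on $\bY$) applies identically in the functional case. The only conceptual wrinkle relative to the smoothed case is the presence of the basis matrix $\Theta$, which appears on both sides of the expression: once through the coefficient expansion $\gamma(t) = \Theta(t)\blambda$ and once as part of the effective design block $W\Theta$. Handling both occurrences consistently is the main obstacle, but it is entirely mechanical.
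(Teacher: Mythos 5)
Your proof is correct and follows essentially the same route as the paper's: invoke the converged CAVI update $\bmu_{q(\btheta)} = \bigl(\tfrac{a_e + N/2}{B_{q(\sigma_e^2)}}\bigr)\bSigma_{q(\btheta)}\bC'\bY$, extract the $\blambda$ block, and compose with the basis expansion to get linearity in $\bY$. If anything, your version is more careful than the paper's --- you make the evaluation row $\Theta(t)$ and the transposes explicit so that $\bc(t)$ actually conforms as an $N\times 1$ vector, whereas the paper writes $\bc(t) = \bigl(\tfrac{a_e + N/2}{B_{q(\sigma_e^2)}}\bigr)\bSigma_{q(\bgamma)}W\Theta$ and leaves those details (and, like you, the cross-covariance blocks of $\bSigma_{q(\btheta)}$) implicit.
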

\begin{proof}
Let $\bSigma_{q(\bgamma)}$ denote the converged covariance matrix for the smoothed functional effect $\bgamma$ resulting from Algorithm~\ref{a:vam}. That is, $\bSigma_{q(\bgamma)}$ is the principal submatrix of $\bSigma_{q(\btheta)}$ that corresponds to $\bgamma$. Upon convergence, the quantity $\bc(t) = \left( \frac{a_e + \frac{N}{2}}{B_{q(\sigma_e^2)}}\right)\bSigma_{q(\bgamma)}W\Theta$ is a vector such that $\hat{\gamma}(t) = \bc(t)'\bY$.
\end{proof}

Similar to the Gaussian case, the test in the binary setting requires a vector $\bc(t)$ such that $\hat{\gamma}(t) = \bc(t)'\bY^*$ where $\bY^*$ is from the working model.\\

\begin{proposition}
	Upon convergence of Algorithm~\ref{a:bvam}, there exists a vector $\bc(t)$ such that $\hat{\gamma}(t) = \bc(t)'\bY^*$.
\end{proposition}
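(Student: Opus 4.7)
The plan is to adapt the Gaussian functional proof (the immediately preceding proposition) to the probit CAVI updates from Algorithm~\ref{a:bvam}. The key observation is that the argument has the same linear-algebra backbone---inheriting linearity of the $\bmu_{q(\btheta)}$ update in the working response---but with two modifications: the latent response $\bY^*$ replaces $\bY$, and the error-variance prefactor disappears because the probit identifiability constraint fixes the latent-error variance at one.

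First I would let $\bSigma_{q(\bgamma)}$ denote the principal submatrix of the converged $\bSigma_{q(\btheta)}$ indexed by the functional-effect basis coefficients. Using the converged update $\bmu_{q(\btheta)} = \bSigma_{q(\btheta)}\bC'\bmu_{q(\bY^*)}$ from Algorithm~\ref{a:bvam} and restricting to the block of rows corresponding to $\blambda$, I would obtain a representation of $\bmu_{q(\blambda)}$ that is linear in the latent response. Combining this with the basis expansion $\hat{\gamma}(t) = \Theta(t)\bmu_{q(\blambda)}$ yields
\[
	\hat{\gamma}(t) = \bc(t)'\bY^*, \qquad \bc(t) = \bSigma_{q(\bgamma)} W \Theta,
\]
evaluated at the appropriate row of $\Theta$ for the chosen $t$. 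This matches the Gaussian functional formula with the $(a_e + N/2)/B_{q(\sigma_e^2)}$ prefactor removed, paralleling exactly the transition from the Gaussian smoothed-effect proposition to its binary analog.

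The only subtlety, and the main obstacle I would flag, is the same one arising in the binary smoothed-effect case: since inference is conducted through the working model, the ZLS representation must be in terms of $\bY^*$ rather than the observed binary $\bY$, consistent with the non-Gaussian ZLS construction in \cite{Zhang2003}. Once this substitution is acknowledged and the correct block of $\bSigma_{q(\btheta)}$ is extracted, linearity of the $\bmu_{q(\btheta)}$ update in $\bmu_{q(\bY^*)}$ reduces the remainder of the argument to a direct transcription of the Gaussian functional proof.
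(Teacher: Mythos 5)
Your proposal is correct and follows essentially the same route as the paper: identify $\bSigma_{q(\bgamma)}$ as the principal submatrix of the converged $\bSigma_{q(\btheta)}$, exploit the linearity of the $\bmu_{q(\btheta)}$ update in the working (latent) response, and arrive at $\bc(t) = \bSigma_{q(\bgamma)}W\Theta$ with the error-variance prefactor absent because the probit latent variance is fixed at one. The paper's proof simply states this quantity directly, so your additional justification is a faithful elaboration rather than a different argument.
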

\begin{proof}
Let $\bSigma_{q(\bgamma)}$ denote the converged covariance matrix for the smoothed functional effect $\bgamma$ resulting from Algorithm~\ref{a:bvam}. Thus, $\bSigma_{q(\bgamma)}$ is the principal submatrix of $\bSigma_{q(\btheta)}$ that corresponds to $\bgamma$. Upon convergence, the quantity $\bc(t) = \bSigma_{q(\bgamma)}W\Theta$ is a vector such that $\hat{\gamma}(t) = \bc(t)'\bY^*$.
\end{proof}

The statistic for the functional effect requires only slight alteration from Equation~\ref{eq:semiZLS}:
\begin{align*}
	G(\bY) = \int_{t_1}^{t_T} \bY'\bc(t)\bc(t)'\bY dt = \bY'Q \bY,
\end{align*}
where $Q = \int_{t_1}^{t_T} \bc(t)\bc(t)' dt$. A similar approximation to that used for the smoothed effect results in the $p$-value having the form $Pr\left[\chi^2_{\nu_{\gamma}} > G_{\gamma}(\bY)/\kappa_{\gamma}\right]$ where $\kappa_{\gamma} = \psi_{\gamma}/(2e_{\gamma})$ and a degrees of freedom of $\nu_{\gamma} = 2e_{\gamma}^2/\psi_{\gamma}$. Once again, when the outcome is binary, we replace $\bY$ with $\bY^*$ in the formulation of the statistic.  Under $H_0$, the approximations to the mean and variance are $e_{\gamma} \approx tr(QV)$ and $\psi_{\gamma} \approx 2tr\left\{ (QV)^2 \right\}$ where $V$ is the covariance of $\bY$ or working covariance of $\bY^*$, depending on the outcome type.

\section{Empirical Study}\label{s:sim}

To evaluate the tests proposed in Section~\ref{s:test}, we consider testing a single smoothed effect or functional effect at a time. For smoothed effects, we generate models of the form
\begin{align*}
	y_i = \alpha +  s(z_{i}) + \epsilon_i \text{ and } y_i^* = \alpha +  s(z_{i}) + \epsilon_i.
\end{align*}
In the functional case, we generate models using
\begin{align*}
	y_i &= \alpha + \int_{t\in\mathcal{T}} w_{i}(t)\gamma(t) dt + \epsilon_i \text{ and } \\
	y_i^* &= \alpha + \int_{t\in\mathcal{T}} w_{i}(t)\gamma(t) dt + \epsilon_i.
\end{align*}
Under both, $\epsilon_i \stackrel{iid}{\sim} N(0, 1)$ and $y_i^*$ is the latent variable from the binary model in Equation~\eqref{eq:latent}. We examine type I error and power for sample sizes of $N = 50, 100,$ and 200.

Under both outcome types, the functional form of $s()$ is based on either the negated cdf of a standard normal to give a decreasing sigmoidal effect, $\Phi$, or the pdf of a gamma distribution with both parameters set to 2, $\Gamma$, which produces a non-central, skewed peak. Thus,
\begin{align*}
	s(z_i) = -\Phi\left(\frac{z_i - 0.5}{0.5}\right)\text{ or } s(z_i) = \frac{2^2}{\Gamma(2)}z_i e^{-2 z_i}.
\end{align*}
Both ``true'' effect types were chosen to mimic the observed smoothed effects in the lidar and ragweed data, respectively, which we analyze in Section~\ref{s:app}. When generating data using a true $\Phi$ curve, we take $z_i$ from a $N(0,1)$ and when generating under the $\Gamma$ curve, we take $x_i$ from a $\chi_1^2$. 

The functional effects are either a two peak effect constructed by adding two Gaussian pdfs together or a seasonal effect based on a $\sin$ curve. The curves are then either
\begin{align*}
	\gamma(t) &= \frac{1}{4}\sqrt{\frac{100}{2\pi}} \exp\left\{ -\frac{100}{2}\left(t - \frac{1}{4} \right)^2 \right\}\\
	&+ \frac{1}{8}\sqrt{\frac{100}{2\pi}} \exp\left\{ -\frac{100}{2}\left(t - \frac{3}{4} \right)^2 \right\}\text{ or}\\
	\gamma(t) &= \sin\left\{\pi\left(4t - 1\right) \right\}\left(t + \frac{1222}{10000} \right).
\end{align*}
The grid is taken to be on the unit interval, $\mathcal{T} = (0,1)$, with equally spaced points of size $T = 50$ or 100. To generate $w_i(t)$, we use a Gaussian process centered at $\gamma(t)$ with an auto-regressive 1 covariance structure and correlation set to $\rho = 0.5$.


\subsection{Type I Error}
\label{ss:tie}

\begin{table}
\centering
\caption{Type I error for smoothed effect using the nominal $\alpha$ of 0.05. Abbreviations: Out. is for Outcome, Gauss. is for Gaussian, and Bin. is for Binary.\label{t:ties}}
\begin{tabular}{lllccc}
\toprule
\multirow{2}{*}{Out.} & \multirow{2}{*}{Curve}  & \multicolumn{3}{c}{Sample Size} \\
\cmidrule{3-5}
& & {50} & {100} & {200} \\
\midrule
Gauss.  &  $\Phi$ & 0.019 & 0.035 & 0.050 \\
       &      $\Gamma$ & 0.027 & 0.031 & 0.045 \\
\midrule 
Bin.       & $\Phi$ & 0.004 & 0.022 & 0.058\\
       & $\Gamma$ & 0.007 & 0.020 & 0.031\\
\bottomrule
\end{tabular}
\end{table}

Type I error rates for the various settings and outcome types can be found in Tables~\ref{t:ties} and~\ref{t:tief} and are based on 1000 simulated datasets. In general, the choice of knots varies by effect type (smoothed or functional) and outcome type (Gaussian or binary). For smoothed effects on Gaussian outcomes, preliminary testing found that $K = 8$ knots worked well across all settings and produced the well controlled type I error rates from Table~\ref{t:ties}. Under the binary outcome, we found that $K = 6$ knots was preferable for controlling type I error across most settings. Error rates tend to increase as $N$ increases and for the binary case can be quite small for the smallest $N$. Decreasing $K$ when $N$ is small does give rates closer to nominal, which we study in Section~\ref{ss:simBin}.

\begin{table}
\centering
\caption{Type I error for functional effects using the nominal $\alpha$ of 0.05. Abbreviations: Out. is for Outcome, Gauss. is for Gaussian, and Bin. is for Binary.\label{t:tief}}
\begin{tabular}{llrccc}
\toprule
\multirow{2}{*}{Out.} & \multirow{2}{*}{Curve} & \multirow{2}{*}{$T$} & \multicolumn{3}{c}{Sample Size} \\
\cmidrule{4-6}
& & & {50} & {100} & {200} \\
\midrule
Gauss. & Two Peak  
           & 50 & 0.029 & 0.042 & 0.052 \\
    &  
           & 100 & 0.049 & 0.054 & 0.051 \\
\cmidrule{3-6} 
 & Seas.
          & 50 & 0.029 & 0.042 & 0.052 \\
       &  
           & 100 & 0.049 & 0.054 & 0.051 \\
\cmidrule{2-6} 
Bin. & Two Peak   
           & 50 &  0.009 & 0.016 & 0.048 \\
    &  
           & 100 & 0.005 & 0.012 & 0.055 \\
\cmidrule{3-6} 
 & Seas.
          & 50 
          & 0.009 & 0.016 & 0.048 \\
       &  
           & 100 & 0.005 & 0.012 & 0.055 \\
\bottomrule
\end{tabular}
\end{table}

For the functional effects in Table~\ref{t:tief}, we observe a similar pattern to the smoothed case. Type I error rates are smallest when $N$ is small and move toward nominal as $N$ increases. Once again, when the outcome is binary, we see the test is quite conservative for the two smaller sample sizes of $N = 50$ and 100. Knot choice in the functional case also depends on outcome type with $L = 12$ knots performing well in preliminary testing for Gaussian outcomes and $L = 9$ performing well for binary outcomes across all settings. Lowering $L$ can improve type I error when the sample size is smaller. We explore this further in Section~\ref{ss:simBin}.

\subsection{Power}
\label{ss:pow}

\begin{table}
\centering
\caption{Power for smoothed effects using the nominal $\alpha$ of 0.05. Abbreviations: Out. is for Outcome, Gauss. is for Gaussian, and Bin. is for Binary.\label{t:pows}}
\begin{tabular}{llrccc}
\toprule
\multirow{2}{*}{Out.} & \multirow{2}{*}{Curve} & \multirow{2}{*}{$N$}  & \multicolumn{3}{c}{$\xi$} \\
\cmidrule{4-6}
& & & 1 & 3 & 5 \\
\midrule
Gauss.  &  $\Phi$ & 50 & 0.092 & 0.962 & 1.000 \\
	& & 100 & 0.456 & 1.000 & 1.000 \\
	& & 200 & 0.624 & 1.000 & 1.000 \\
\cmidrule{3-6}
       &      $\Gamma$ & 50 & 0.060 & 0.692 & 0.998 \\
	& & 100 &  0.170 & 0.986 & 1.000  \\
	& & 200 & 0.336 & 1.000 & 1.000 \\
\midrule 
Bin.       & $\Phi$ & 50 & 0.018 & 0.152 & 0.256  \\
	& & 100 &0.103 & 0.516 & 0.672 \\
	& & 200 &0.309 & 0.862 & 0.938 \\
\cmidrule{3-6}
       & $\Gamma$ & 50 & 0.032 & 0.262 & 0.386 \\
	& & 100 & 0.120 & 0.568 & 0.716 \\
	& & 200 & 0.278 & 0.848 & 0.954\\
\bottomrule
\end{tabular}
\end{table}

To evaluate power, we scale the curves $s(z_i)$ and $\gamma(t)$ by a factor, $\xi = 1, 3,$ or 5. For the smoothed effects, consistent with the type I error evaluation, we use $K = 8$ and 6 knots for Gaussian and binary models, respectively. For the functional effects, we use $L = 12$ and 9 for Gaussian and binary models, respectively. Tables~\ref{t:pows} and~\ref{t:powf} contain power for all settings except for the binary model with a functional effect for $N = 50$ since this setting requires fewer knots.

The tests for smoothed effects gains power as $N$ increases, regardless of outcome type (Table~\ref{t:pows}). For the same number of knots used to examine type I error, power is lower when $N$ is small but increases as $\xi$ gets larger. The tests tend to perform better in the Gaussian case when testing smoothed effects, although for large $N$ the binary model achieves reasonable levels of power. Power is similar between the two curve types.


\begin{table}
\centering
\caption{Power for functional effects using the nominal $\alpha$ of 0.05 when $T = 50$. Abbreviations: Out. is for Outcome, Gauss. is for Gaussian, and Bin. is for Binary.\label{t:powf}}
\begin{tabular}{llrccc}
\toprule
\multirow{2}{*}{Out.} & \multirow{2}{*}{Curve} & \multirow{2}{*}{$N$} & \multicolumn{3}{c}{$\xi$} \\
\cmidrule{4-6}
& & & 1 & 3 & 5 \\
\midrule
Gauss. & Two Peak    & 50 & 1.000 & 1.000 & 1.000 \\
    &   & 100 & 1.000 & 1.000 & 1.000 \\
    &   & 200 & 1.000 & 1.000 & 1.000 \\
\cmidrule{3-6} 
 & Seas. & 50 & 0.998 & 1.000 & 1.000 \\
    &   & 100 & 1.000 & 1.000 & 1.000 \\
    &   & 200 & 1.000 & 1.000 & 1.000 \\
\cmidrule{2-6} 
Bin. & Two Peak   
       & 100 & 0.160 & 0.996 & 1.000 \\
    &   & 200 & 0.874 & 1.000 & 1.000 \\
\cmidrule{3-6} 
 & Seas.  
       & 100 & 0.100 & 0.912 & 0.996 \\
    &   & 200 & 0.608 & 1.000 & 1.000 \\
\bottomrule
\end{tabular}
\end{table}

For functional effects,  we see that power increases quickly as $\xi$ increases, regardless of outcome type (Table~\ref{t:powf}). As with the smoothed case, power improves as $N$ increases as well. The curve type does not have much of an impact on power for larger values of $\xi$. The Gaussian case in particular achieves a high level of power quite quickly. Missing from Table~\ref{t:powf} is the power under the $N = 50$ case for the binary outcome. The number of knots that generally gave good type I error, $L = 9$, is too large for the smaller sample case when evaluating power. Thus, we defer an exploration of this setting to Section~\ref{ss:simBin}. Table~\ref{t:powf} displays power when $T = 50$. We also evaluated the $T = 100$ case and found good power for Gaussian outcomes under all $N$ and the binary outcomes when $N = 100$ and 200. Results from that simulation are in Section 1 of the Supplementary Material.

\subsection{Small Sample Binary Model}
\label{ss:simBin}

When the outcome is binary and $N = 50$, Tables~\ref{t:ties} and~\ref{t:tief} suggest a very conservative test. Our power evaluation reflects this conservatism in the smoothed case. For functional effects, we omit the results under $L = 9$ knots due to model instability. As we now demonstrate, the binary case is sensitive to not only the sample size but the number of knots. When $N = 50$, lowering $L$ for functional effects or $K$ for smoothed effects will improve the type I error rate and power.

\begin{table}
\centering
\caption{Power for binary models using the nominal $\alpha = 0.05$ when $N = 50$. Functional effects were generated under the $T = 50$ setting. Abbreviations: Smo. is for smoothed, Func. is for functional, and Kn. is for knots---$K$ for smoothed effects, $L$ for functional effects. \label{t:bin}}
\begin{tabular}{llrcccc}
\toprule
 \multirow{2}{*}{Effect} & \multirow{2}{*}{Curve} & \multirow{2}{*}{Kn.} & \multicolumn{4}{c}{$\xi$} \\
\cmidrule{4-7}
 & & & 0 & 1 & 3 & 5 \\
\midrule
Smo. & $\Phi$ & 4 & 0.037 & 0.210 & 0.686 & 0.832 \\
 & & 5 & 0.004 & 0.022 & 0.152 & 0.256   \\ 
\cmidrule{3-7} 
& $\Gamma$ & 4 & 0.018 & 0.152 & 0.582 & 0.774 \\
& & 5 & 0.007 & 0.032 & 0.262 & 0.386 \\
\cmidrule{2-7}
Func. & DN  & 6 & 0.034 & 0.344 & 0.994 & 1.000 \\
  & & 7 & 0.017& 0.166 & 0.914 & 0.920 \\
\cmidrule{3-7} 
& Seas. & 6 & 0.034 & 0.144 & 0.854 & 0.982  \\
 &  & 7 & 0.017 &  0.048 & 0.278 & 0.406 \\
\bottomrule
\end{tabular}
\end{table}

Table~\ref{t:bin} contains type I error and power for both smoothed and functional effects when the outcome is binary and the sample size is smaller. We consider $K = 4$ and 5 knots in the smoothed case and $L = 6$ and 7 knots in the functional cases. Type I error is the table value when $\xi = 0$, otherwise the power evaluation is the same as in Tables~\ref{t:pows} and~\ref{t:powf}. All testing is performed at the $\alpha = 0.05$ level.

When the sample size is small, the test obtains closer to nominal type I error with smaller numbers of knots: $K = 4$ for smoothed effects and $K = 6$ knots for functional effects. The test is still conservative, but less so when compared to the results from Tables~\ref{t:ties} and~\ref{t:tief}. The power gain is noticeable, particularly for the functional effects where $L = 6$ knots achieves power when $N = 50$ similar to what we observe for larger values of $N$. This is under the $T = 50$ case. As $T$ increases, however, more knots are needed. When $T = 100$, for example, $L = 7$ knots produces a closer to nominal type I error while $L = 6$ knots can result in inflated type I error---see Section 1 of the Supplementary Material for additional details.

The number of knots we use in Sections~\ref{ss:tie} and~\ref{ss:pow} work in general for most problems. But as can be seen from Table~\ref{t:bin} as well as the additional results in the Supplementary Material, care must be taken when selecting the number of knots for binary outcome models with smaller sample sizes. 

\section{Data Illustrations}\label{s:app}

\begin{table}
\centering
\caption{ZLS test results for the data illustrations. All tests are performed at the $\alpha = 0.05$ level. Abbreviations: Rag. denotes ragweed, Stront. denotes strontium, DiS denotes day-in-season.\label{t:data}}
\begin{tabular}{lrccc}
\toprule
 Data & $N$ & Variable & $\chi^2$ $(\nu)$ & $p$-value \\
\midrule
 Lidar & 221 & $s(\text{range}_i)$ & 6.77 (1.32) & 0.015 \\
\cmidrule{2-5} 
  Rag. & 335 & $s_1(\text{DiS}_i)$ & 35.7 (1.73) & $<0.001$ \\
  &  &  $s_2(\text{temp}_i)$ & 1.49 (1.19) & 0.271 \\
\cmidrule{2-5}
  Stront. & 106 & $s(\text{age}_i)$ & 3.19 (1.29) & 0.106 \\
\cmidrule{2-5} 
 Flight & 26 & $\Delta$SaO2$_i(t)$ & 0.56 (1.41) & 0.601 \\
\cmidrule{2-5} 
 DTI & 99 & CCA$_i(t)$ & 9.02 (2.56) & 0.020 \\
\bottomrule
\end{tabular}
\end{table}

We demonstrate the use of the variational ZLS test in five data settings that cover a wide spectrum of substantive fields. We consider three smoothed estimation problems and two scalar-on-function regression problems. Two of the five datasets have binary outcomes while the rest have Gaussian outcomes. Test results are in Table~\ref{t:data} and some graphical results are provided below. Additional graphical results are in the Supplementary Material, including graphs of the changes in the ELBO for each model. Code to implement our models is available for download as an \texttt{R} package from \url{https://github.com/markjmeyer/fitVAM}.

\subsection{Smoothing Problems}

\subsubsection{Lidar Data}

\begin{figure}
	\centering
	\includegraphics[scale = 0.425]{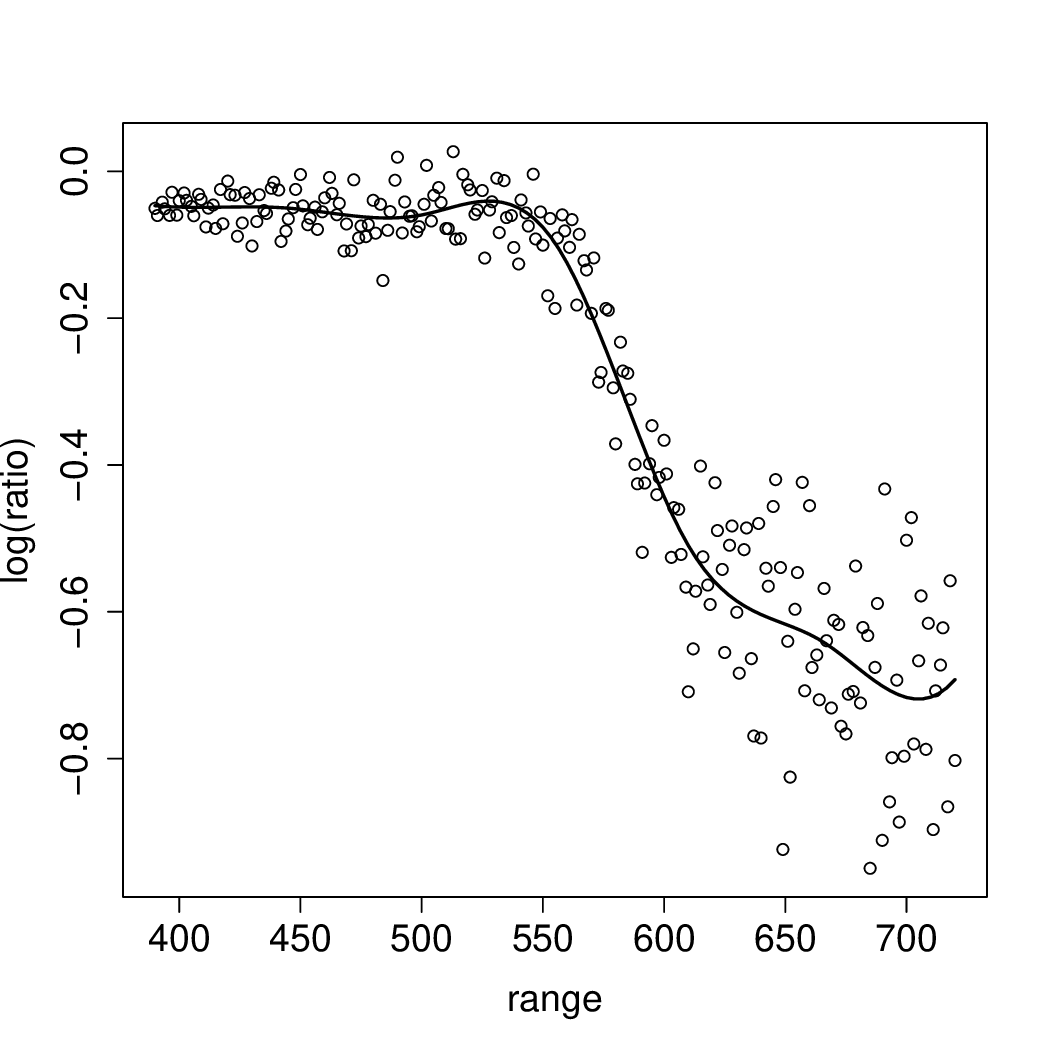}
	\caption{Estimated semiparametric curve, $\hat{s}(\text{range}_i)$, for the lidar data model.\label{f:lidar}}
\end{figure}

The lidar data is a classic illustration of semiparametric regression used by \cite{Ruppert2003} to illustrate the concept. It features 221 observations from a light detection and ranging experiment measuring the log of the ratio of light received from two laser sources. The predictor of interest in this example is the distance travelled before the light is reflected back to its source. The data comes from a text on spectroscopic techniques \citep{Sigrist1994} and is available in the \texttt{R} package \texttt{SemiPar} \citep{Wand2018}. We fit the model
\begin{align*}
	\log(\text{ratio}_i) = \alpha + s(\text{range}_i) + \epsilon_i,
\end{align*}
where $\epsilon_i \sim N(0, \sigma^2)$, using Algorithm~\ref{a:vam} and use the variational ZLS test from Section~\ref{s:test} to perform a test of the null hypothesis $H_0: s(\text{range}_i) = 0$ for all values of $\text{range}_i$. 

The results of the test are in Table~\ref{t:data}. The estimated curve is in Figure~\ref{f:lidar}. Given the smoothed effect and Gaussian outcome, we use $K = 8$ knots in modeling. The resulting curve captures the nature of the relationship in the data quite well, Figure~\ref{f:lidar}. Based on the ZLS test results in Table~\ref{t:data}, there appears to be a significant association between range and log-ratio ($\chi^2_{1.32} = 6.77$, $p = 0.015$).

\subsubsection{Ragweed Data}

\begin{figure}
	\centering
	\includegraphics[scale = 0.425]{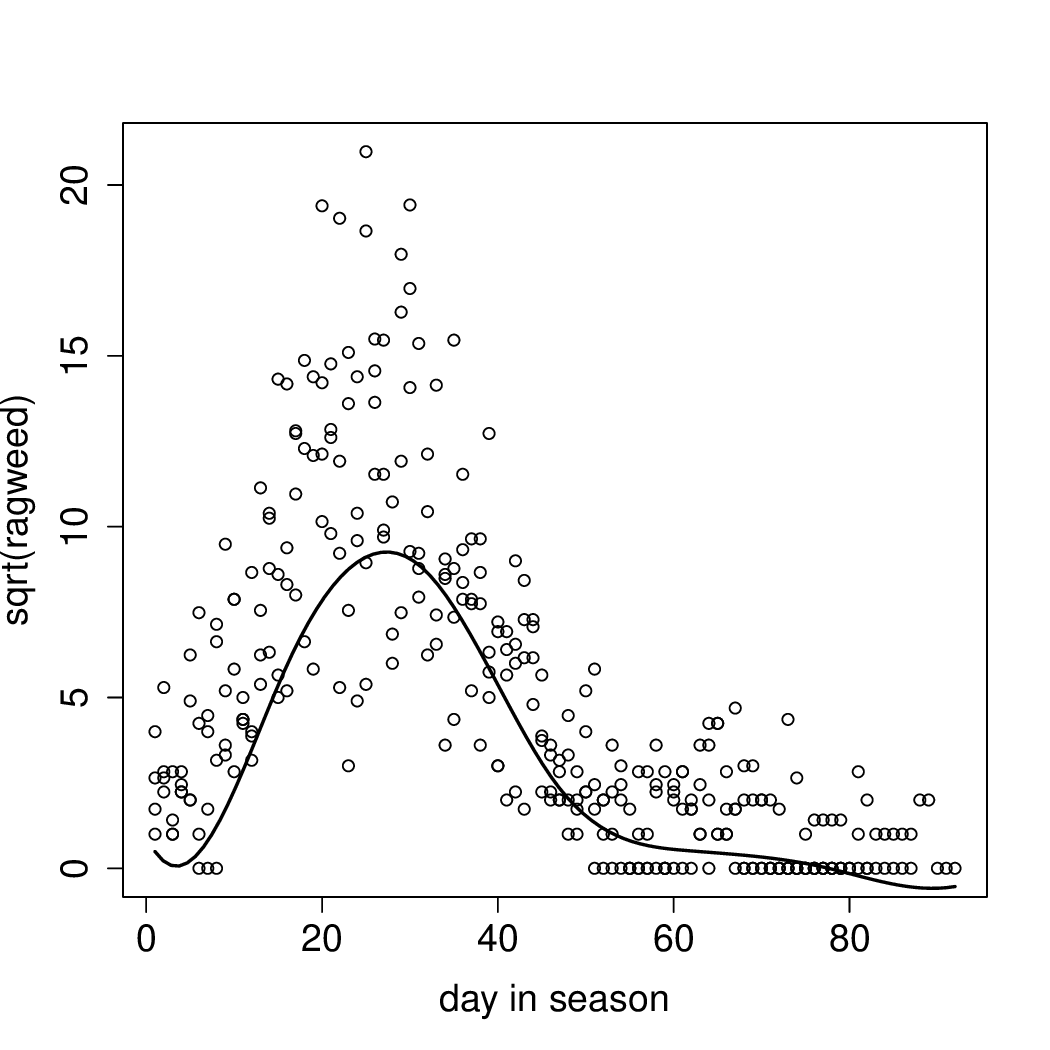}
	\caption{Estimated semiparametric curve, $\hat{s_1}(\text{DiS}_i)$, for the ragweed data model.\label{f:rag}}
\end{figure}

\cite{Stark1997} describe determinants of ragweed levels for 335 days in the city of Kalamazoo, Michigan, USA. Included in the data are the day in the current ragweed pollen season the measurement was taken on and temperature, among others. The former variable gives a sense of how far along in the pollen, i.e. allergy, season the measurement was taken. The data is available in the \texttt{R} package \texttt{SemiPar} \citep{Wand2018}. We jointly estimate smoothed effects for both variables predicting the square root of ragweed levels:
\begin{align*}
	\sqrt{\text{ragweed}_i}= \alpha + s_1(\text{DiS}_i) + s_2(\text{temp.}_i)  + \epsilon_i,
\end{align*}
where $\epsilon_i \sim N(0, \sigma^2)$, DiS denotes day in season, and temp. denotes temperature. Estimation is performed using Algorithm~\ref{a:vam} on $K = 8$ knots and we test both the null of $H_0: s(\text{DiS}_i) = 0$ and $H_0: s(\text{temp.}_i) = 0$.

Table~\ref{t:data} displays the results of the variational ZLS test while Figure~\ref{f:rag} contains the estimated curve for day in season, i.e. $\hat{s}_1(\text{DiS}_i)$. The graph of the estimated effect of temperature is in the Supplementary Material. Even in the presence of temperature, day in season is a highly significant predictor of the ragweed level at the nominal $\alpha$ level ($\chi^2_{1.73} = 35.7$, $p < 0.001$). Temperature, on the other hand, is not a significant predictor. If we fit the model with temperature alone, we find moderate significance at the $\alpha = 0.05$ level ($\chi^2_{1.11} = 4.69$, $p = 0.036$).

\subsubsection{Strontium Data}

\begin{figure}
	\centering
	\includegraphics[scale = 0.425]{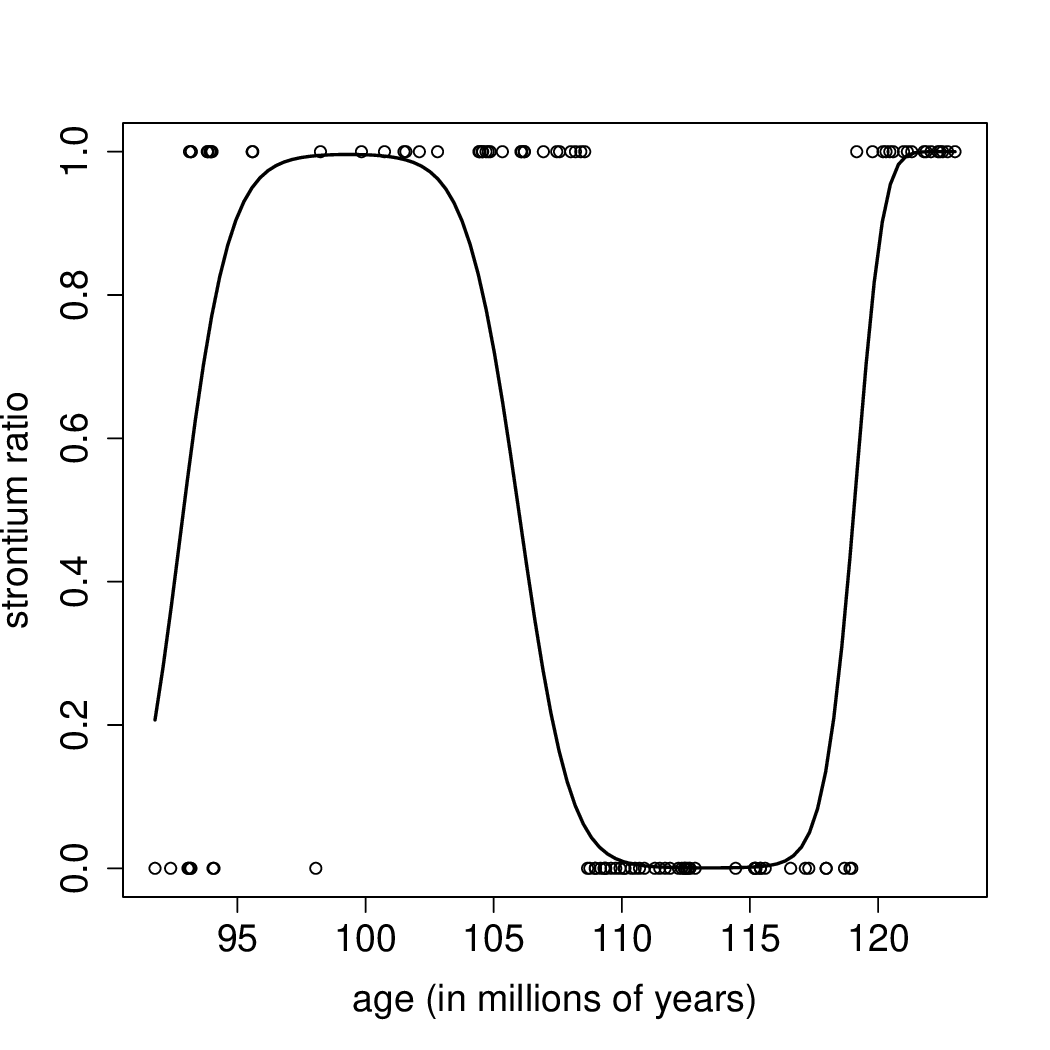}
	\caption{Estimated semiparametric curve, $\hat{s_1}(\text{age}_i)$, for the ragweed data model.\label{f:stront}}
\end{figure}

We now examine a binary outcome model from a study of fossils looking at the ratios of strontium isotopes from mid-cretaceous deep-sea sections \citep{Bralower1997}. The study examined the ratio of two strontium isotopes in relation to the age (in millions of years) of the sample. We discretize the outcome to below median strontium ratio (coded as 0) versus above median strontium ratio (coded as 1) and analyze the effect of age using the latent model
\begin{align*}
	y_i^* = \alpha + s(\text{age}_i) + \epsilon_i^*,
\end{align*}
where $\epsilon_i^* \sim N(0, 1)$.

We obtain estimates via Algorithm~\ref{a:bvam} on $K = 6$ knots and perform the ZLS to test the null that $H_0: s(\text{age}_i) = 0$ at the nominal level. The results of the test are in Table~\ref{t:data} while the estimated curve is in Figure~\ref{f:stront}. While we do not find evidence of a statistically significant relationship between the ratios of strontium isotopes and the age of the sample ($\chi^2_{1.29} = 3.19$, $p = 0.106$), the estimated curve fits the discretized data quite well.

\subsection{Functional Problems}

\subsubsection{Flight Health Data}

\cite{Meyer2019} describes a study of the impact of exposure to altitude in commercial flight on in-flight markers of heart health finding that exposure to altitude negatively impacts some markers, including heart rate. Using this data, we examine the  impact of exposure to altitude for 310 minutes on post-flight heart rate. Exposure to altitudes in commercial flights is known to decrease blood oxygen saturation SaO$_2$. We use five-minute SaO$_2$ measurements to examine post-flight heart rate. The study was a cross-over design with patients having one dat at altitude and one day at sea level in a hypobaric pressure chamber (the order of treatment arms was block randomized). To account for the pairing, we pre-difference both the SaO$_2$ curves and post-experimental condition heart rate. The model is then
\begin{align*}
	\Delta hr_i &= \alpha + \int_{t\in\mathcal{T}} \Delta\text{SaO2}_{i}(t)\gamma(t) dt + \epsilon_i,
\end{align*}
where $\epsilon_i \sim N(0, \sigma^2)$, and $\Delta$ denotes the change in the variable. For estimation via Algorithm~\ref{a:vam}, we employ 12 knots. We performed the differencing by taking active treatment (at altitude) minus placebo (at sea level). From table~\ref{t:data}, we lack evidence to suggest there is an association between changes in blood oxygen level in flight on changes in heart rate post-flight, from altitude to sea level ($\chi^2_{1.41} = 0.56$, $p =  0.601$). 

\subsubsection{DTI Data}

Our last illustration looks at the impact of a diffusion tensor image fractional anisotropy tract profiles from the corpus callosum on paced auditory serial addition test (PASAT) scores in patients with multiple sclerosis. This data has been considered as a functional data example by \cite{Goldsmith2011} and \cite{Goldsmith2012}, among others. We first discretize the PASAT score to below median (coded as 0) and above median (coded as 1). Lower PASAT scores are associated with cognitive impairment. This binary variable serves as our outcome with the latent model
\begin{align*}
	y_i^* &= \alpha + \int_{t\in\mathcal{T}} \text{CCA}_{i}(t)\gamma(t) dt + \epsilon_i^*,
\end{align*}
where $\epsilon_i^* \sim N(0, 1)$ and CCA denotes the diffusion tensor image fractional anisotropy tract profiles from the corpus callosum. Although the ZLS test on variational binary functional models can be conservative for $N = 100$ (see Table~\ref{t:tief}), we use $L = 9$ knots for this analysis. Table~\ref{t:data} displays the results which suggests an association between the tract profiles and whether or not the PASAT score was high or low ($\chi^2_{2.56} = 9.02$, $p = 0.020$).

\section{Discusion}\label{s:disc}

As variational techniques gain popularity, additional methods will be required to ensure that statistical inference can be confidently performed within the framework. We show that for MFVB approximations to Gaussian and binary additive models, the CAVI algorithms admit forms that can be used to implement a global test of a smoothed effect, upon convergence of the algorithm. We further demonstrate that the penalized spline representation of the functional effect in these additive models also admits a form via the CAVI algorithm whereby one can construct a global test of a functional effect. 

Our empirical evaluation demonstrates that the testing framework has good Frequentist properties in terms of type I error rate and power. The data illustrations show the test is applicable to a wide range of substantive questions with sample sizes varying from 26 to 335. While the choice of knots can impact type I error and power, particularly for the binary case, we identify a set of reasonable choices for the number knots depending on the outcome, effect type, sample size, and grid (in the functional case).

Our work is applied to MFVB approximations for Gaussian and binary additive models using the product density transformation. Other transformations may admit similar forms, for example the tangent transformation. The work of \cite{LutsWand2015}, which examines semiparametric models for Poisson and negative binomially distributed outcomes, may also be amenable. Recent work to improve variance estimation by \cite{Giordano2018} could also present useable forms for ZLS-like tests. These other transformations and approaches are of interest for future work to demonstrate the applicability of the test within the broader variational framework.

\appendix

\section*{Supplementary information}

Further results from our empirical study and data illustrations are in the Supplementary Material accompanying this manuscript.

\bibliographystyle{abbrvnat} 
\bibliography{fullbib}

\begin{thebibliography}{46}
\providecommand{\natexlab}[1]{#1}
\providecommand{\url}[1]{\texttt{#1}}
\expandafter\ifx\csname urlstyle\endcsname\relax
  \providecommand{\doi}[1]{doi: #1}\else
  \providecommand{\doi}{doi: \begingroup \urlstyle{rm}\Url}\fi

\bibitem[Albert and Chib(1993)]{AlbertChib1993}
J.~Albert and S.~Chib.
\newblock Bayesian analysis of binary and polychotomous response data.
\newblock \emph{Journal of the American Statistical Association}, 88:\penalty0
  669--679, 1993.
\newblock \doi{10.1080/01621459.1993.10476321}.

\bibitem[Bickel et~al.(2013)Bickel, Choi, Chang, and Zhang]{Bickel2013}
P.~Bickel, D.~Choi, X.~Chang, and H.~Zhang.
\newblock {Asymptotic normality of maximum likelihood and its variational
  approximation for stochastic blockmodels}.
\newblock \emph{The Annals of Statistics}, 41:\penalty0 1922--1943, 2013.
\newblock \doi{10.1214/13-AOS1124}.

\bibitem[Blei et~al.(2017)Blei, Kucukelbir, and McAuliffe]{Blei2017}
D.~M. Blei, A.~Kucukelbir, and J.~D. McAuliffe.
\newblock Variational inference: a review for statisticians.
\newblock \emph{Journal of the American Statistical Association}, 112:\penalty0
  859--877, 2017.

\bibitem[Bralower et~al.(1997)Bralower, Fullagar, Paull, Dwyer, and
  Leckie]{Bralower1997}
T.~Bralower, P.~Fullagar, C.~K. Paull, G.~S. Dwyer, and R.~M. Leckie.
\newblock Mid-cretaceous strontium-isotope stratigraphy of deep-sea sections.
\newblock \emph{Geological Society of America Bulletin}, 109:\penalty0
  1421--1442, 1997.

\bibitem[Chen et~al.(2018)Chen, Mukherjee, Adar, Berrocal, and Coull]{Chen2018}
Y.~H. Chen, B.~Mukherjee, S.~D. Adar, V.~J. Berrocal, and B.~A. Coull.
\newblock Robust distributed lag models using data adaptive shrinkage.
\newblock \emph{Biostatistics}, 19:\penalty0 461--478, 2018.

\bibitem[Crainiceanu et~al.(2005)Crainiceanu, Ruppert, Claeskens, and
  Wand]{Crainiceanu2005}
C.~Crainiceanu, C.~Ruppert, G.~Claeskens, and M.~P. Wand.
\newblock Exact likelihood ratio tests for penalised splines.
\newblock \emph{Biometrika}, 92:\penalty0 91--103, 2005.
\newblock \doi{10.1093/biomet/92.1.91}.

\bibitem[Crainiceanu and Ruppert(2004)]{Crainiceanu2004}
C.~M. Crainiceanu and D.~Ruppert.
\newblock Likelihood ratio tests in linear mixed models with one variance
  component.
\newblock \emph{Journal of the Royal Statistical Society: Series B},
  66:\penalty0 165--185, 2004.
\newblock \doi{10.1111/j.1467-9868.2004.00438.x}.

\bibitem[Eilers and Marx(1996)]{Eilers1996}
P.~H.~C. Eilers and B.~D. Marx.
\newblock Flexible smoothing with b-splines and penalties.
\newblock \emph{Statistical Science}, 11:\penalty0 89--121, 1996.
\newblock \doi{10.1214/ss/1038425655}.

\bibitem[Gelman et~al.(2013)Gelman, Carlin, Stern, Dunson, Vehtari, and
  Rubin]{Gelman2013}
A.~Gelman, J.~B. Carlin, H.~S. Stern, D.~B. Dunson, A.~Vehtari, and D.~B.
  Rubin.
\newblock \emph{Bayesian Data Analysis}.
\newblock Chapman and Hall-CRC, Boca Raton, FL, 3$^{\text{rd}}$ edition, 2013.

\bibitem[Giordano et~al.(2018)Giordano, Broderick, and Jordan]{Giordano2018}
R.~Giordano, T.~Broderick, and M.~I. Jordan.
\newblock Covariances, robustness, and variational bayes.
\newblock \emph{Journal of Machine Learning Research}, 19:\penalty0 1--49,
  2018.

\bibitem[Goldsmith and Kitago(2016)]{Goldsmith2016}
J.~Goldsmith and T.~Kitago.
\newblock Assessing systematic effects of stroke on motor control using
  hierarchical function-on-scalar regression.
\newblock \emph{Journal of the Royal Statistical Society, Series C},
  65:\penalty0 215--236, 2016.
\newblock \doi{10.1111/rssc.12115}.

\bibitem[Goldsmith et~al.(2011)Goldsmith, Bobb, Crainiceanu, Caffo, and
  Reich]{Goldsmith2011}
J.~Goldsmith, J.~Bobb, C.~M. Crainiceanu, B.~Caffo, and D.~Reich.
\newblock Penalized functional regression.
\newblock \emph{Journal of Computational and Graphical Statistics},
  20:\penalty0 830--851, 2011.
\newblock \doi{10.1198/jcgs.2010.10007}.

\bibitem[Goldsmith et~al.(2012)Goldsmith, Crainiceanu, Caffo, and
  Reich]{Goldsmith2012}
J.~Goldsmith, C.~Crainiceanu, B.~Caffo, and D.~Reich.
\newblock Longitudinal penalized functional regression for cognitive outcomes
  on neuronal tract measurements.
\newblock \emph{Journal of the Royal Statistical Society, Series C},
  61:\penalty0 453--469, 2012.
\newblock \doi{10.1111/j.1467-9876.2011.01031.x}.

\bibitem[Goldsmith et~al.(2013)Goldsmith, Greven, and
  Crainiceanu]{Goldsmith2013}
J.~Goldsmith, S.~Greven, and C.~Crainiceanu.
\newblock Corrected confidence bands for functional data using principal
  components.
\newblock \emph{Biometrics}, 69:\penalty0 41--51, 2013.
\newblock \doi{10.1111/j.1541-0420.2012.01808.x}.

\bibitem[Greven et~al.(2008)Greven, Crainiceanu, K\"uchenhoff, and
  Peters]{Greven2008}
S.~Greven, C.~M. Crainiceanu, H.~K\"uchenhoff, and A.~Peters.
\newblock Restricted likelihood ratio testing for zero variance components in
  linear mixed models.
\newblock \emph{Journal of Computational and Graphical Statistics},
  17:\penalty0 870--891, 2008.
\newblock \doi{10.1198/106186008X386599}.

\bibitem[Hui et~al.(2019)Hui, You, Shang, and M\"uller]{Hui2019}
F.~K.~C. Hui, C.~You, H.~L. Shang, and S.~M\"uller.
\newblock Semiparametric regression using variational approximations.
\newblock \emph{Journal of the American Statistical Association}, 114:\penalty0
  1765--1777, 2019.
\newblock \doi{10.1080/01621459.2018.1518235}.

\bibitem[Ivanescu et~al.(2015)Ivanescu, Staicu, Scheipl, and
  Greven]{Ivanescu2015}
A.~E. Ivanescu, A.-M. Staicu, F.~Scheipl, and S.~Greven.
\newblock Penalized function-on-function regression.
\newblock \emph{Computational Statistics}, 30:\penalty0 539--568, 2015.
\newblock \doi{10.1007/s00180-014-0548-4}.

\bibitem[Lee and Wand(2016)]{Lee2016}
C.~Y.~Y. Lee and M.~P. Wand.
\newblock Variational methods for fitting complex bayesian mixed effects models
  to health data.
\newblock \emph{Statistics in Medicine}, 35:\penalty0 165--188, 2016.
\newblock \doi{10.1002/sim.6737}.

\bibitem[Li et~al.(2016)Li, {Kozey-Keadle}, Kipnis, and Carroll]{Li2016}
H.~Li, S.~{Kozey-Keadle}, V.~Kipnis, and R.~J. Carroll.
\newblock Longitudinal functional additive model with continuous proportional
  outcomes for physical activity data.
\newblock \emph{Stat}, 5:\penalty0 242--250, 2016.
\newblock \doi{10.1002/sta4.121}.

\bibitem[Luts and Wand(2015)]{LutsWand2015}
J.~Luts and M.~P. Wand.
\newblock Variational inference for count response semiparametric regression.
\newblock \emph{Bayesian Analysis}, 10:\penalty0 991--1023, 2015.

\bibitem[Ma et~al.(2021)Ma, Xiao, and Liu]{Ma2021}
W.~Ma, L.~Xiao, and B.~Liu.
\newblock A functional mixed model for scalar on function regression with
  application to a functional {MRI} study.
\newblock \emph{Biostatistics}, 22:\penalty0 439--454, 2021.
\newblock \doi{10.1093/biostatistics/kxz046}.

\bibitem[Malloy et~al.(2010)Malloy, Morris, Adar, Suh, Gould, and
  Coull]{Malloy2010}
E.~J. Malloy, J.~S. Morris, S.~D. Adar, H.~Suh, D.~R. Gould, and B.~A. Coull.
\newblock Wavelet-based functional linear mixed models: an application to
  measurement error-corrected distributed lag models.
\newblock \emph{Biostatistics}, 11:\penalty0 432--452, 2010.
\newblock \doi{10.1093/biostatistics/kxq003}.

\bibitem[McLean et~al.(2014)McLean, Hooker, Staicu, Scheipl, and
  Ruppert]{McLean2014}
M.~W. McLean, G.~Hooker, A.-M. Staicu, F.~Scheipl, and D.~Ruppert.
\newblock Functional generalized additive models.
\newblock \emph{Journal of Computational and Graphical Statistics},
  23:\penalty0 249--269, 2014.
\newblock \doi{10.1080/10618600.2012.729985}.

\bibitem[Meyer et~al.(2015)Meyer, Coull, Versace, Cinciripini, and
  Morris]{Meyer2015}
M.~J. Meyer, B.~A. Coull, F.~Versace, P.~Cinciripini, and J.~S. Morris.
\newblock Bayesian function-on-function regression for multi-level functional
  data.
\newblock \emph{Biometrics}, 71:\penalty0 563--574, 2015.
\newblock \doi{10.1111/biom.12299}.

\bibitem[Meyer et~al.(2019)Meyer, Mordukhovich, Wellenius, Mittleman,
  McCracken, Coull, and McNeely]{Meyer2019}
M.~J. Meyer, I.~Mordukhovich, G.~A. Wellenius, M.~A. Mittleman, J.~P.
  McCracken, B.~A. Coull, and E.~McNeely.
\newblock Changes in heart rate and rhythm during a crossover study of
  simulated commercial flight in older and vulnerable participants.
\newblock \emph{Frontiers Physiology}, 10:\penalty0 1339, 2019.
\newblock \doi{10.3389/fphys.2019.01339}.

\bibitem[Meyer et~al.(2022)Meyer, Morris, Gazes, and Coull]{MeyerMorris2022}
M.~J. Meyer, J.~S. Morris, R.~P. Gazes, and B.~A. Coull.
\newblock Ordinal probit functional outcome regression with application to
  computer-use behavior in rhesus monkeys.
\newblock \emph{Annals of Applied Statistics}, 16:\penalty0 537--550, 2022.
\newblock \doi{10.1214/21-AOAS1513}.

\bibitem[Ormerod and Wand(2010)]{OrmerodWand2010}
J.~T. Ormerod and M.~P. Wand.
\newblock Explaining variational approximations.
\newblock \emph{The American Statistician}, 64:\penalty0 140--153, 2010.

\bibitem[Ramsay and Dalzell(1991)]{Ramsay1991}
J.~O. Ramsay and C.~J. Dalzell.
\newblock Some tools for functional data analysis.
\newblock \emph{Journal of the Royal Statistical Society, Series B},
  53:\penalty0 539--561, 1991.
\newblock \doi{10.1111/j.2517-6161.1991.tb01844.x}.

\bibitem[Ramsay and Silverman(2005)]{Ramsay2005}
J.~O. Ramsay and B.~W. Silverman.
\newblock \emph{Functional Data Analysis}.
\newblock Springer, New York, 2$^{\text{nd}}$ edition, 2005.

\bibitem[Ruppert et~al.(2003)Ruppert, Wand, and Carroll]{Ruppert2003}
D.~Ruppert, M.~P. Wand, and R.~J. Carroll.
\newblock \emph{Semiparametric regression}.
\newblock Cambridge University Press, Cambridge, United Kingdom, 2003.

\bibitem[Satterthwaite(1943)]{Satterthwaite1943}
F.~E. Satterthwaite.
\newblock An approximate distribution of variance components.
\newblock \emph{Biometrika}, 2:\penalty0 110--114, 1943.

\bibitem[Scheipl et~al.(2015)Scheipl, Staicu, and Greven]{Scheipl2015}
F.~Scheipl, A.-M. Staicu, and S.~Greven.
\newblock Functional additive mixed models.
\newblock \emph{Journal of Computational and Graphical Statistics},
  24:\penalty0 477--501, 2015.
\newblock \doi{10.1080/10618600.2014.901914}.

\bibitem[Scheipl et~al.(2016)Scheipl, Gertheiss, and
  Greven]{ScheiplGertheiss2016}
F.~Scheipl, J.~Gertheiss, and S.~Greven.
\newblock Generalized functional additive mixed models.
\newblock \emph{Electronic Journal of Statistics}, 10:\penalty0 1455--1492,
  2016.
\newblock \doi{10.1214/16-EJS1145}.

\bibitem[Sigrist(1994)]{Sigrist1994}
M.~Sigrist, editor.
\newblock \emph{Air Monitoring by Spectroscopic Techniques (Chemical Analysis
  Series)}, volume 197.
\newblock Wiley, New York, 1994.

\bibitem[Stark et~al.(1997)Stark, Ryan, McDonald, and Burge]{Stark1997}
P.~C. Stark, L.~M. Ryan, J.~L. McDonald, and H.~A. Burge.
\newblock Using meteorologic data to model and predict daily ragweed pollen
  levels.
\newblock \emph{Aerobiologia}, 13:\penalty0 177--184, 1997.

\bibitem[Wand(2018)]{Wand2018}
M.~Wand.
\newblock \emph{{SemiPar}: Semiparametic Regression}, 2018.
\newblock URL \url{https://CRAN.R-project.org/package=SemiPar}.
\newblock R package version 1.0-4.2.

\bibitem[Wand(2017)]{Wand2017}
M.~P. Wand.
\newblock Fast approximate inference for arbitrarily large semiparametric
  regression models via message passing.
\newblock \emph{Journal of the American Statistical Association}, 112:\penalty0
  137--168, 2017.
\newblock \doi{10.1080/01621459.2016.1197833}.

\bibitem[Wang and Blei(2019)]{Wang2019}
Y.~Wang and D.~M. Blei.
\newblock Frequentist consistency of variational bayes.
\newblock \emph{Journal of the American Statistical Association}, 114:\penalty0
  1147--1161, 2019.

\bibitem[Westling and McCormick(2019)]{Westling2019}
T.~Westling and T.~H. McCormick.
\newblock Beyond prediction: A framework for inference with variational
  approximations in mixture models.
\newblock \emph{Journal of Computational and Graphical Statistics},
  28:\penalty0 778--789, 2019.
\newblock \doi{10.1080/10618600.2019.1609977}.

\bibitem[Wood(2017)]{Wood2017}
S.~N. Wood.
\newblock \emph{Generalized Additive Models: An Introduction with {R}}.
\newblock Chapman and Hall/CRCs, New York, 2$^{\text{nd}}$ edition, 2017.
\newblock \doi{10.1201/9781315370279}.

\bibitem[Yang and Yang(2024)]{Yang2024}
F.~Yang and Y.~Yang.
\newblock A sparse estimate based on variational approximations for
  semiparametric generalized additive models.
\newblock \emph{Computational Statistics}, 2024.
\newblock \doi{10.1007/s00180-024-01485-2}.

\bibitem[Yang et~al.(2020)Yang, Pati, and Bhattacharya]{YangPati2020}
Y.~Yang, D.~Pati, and A.~Bhattacharya.
\newblock {$\alpha $-variational inference with statistical guarantees}.
\newblock \emph{The Annals of Statistics}, 48:\penalty0 886--905, 2020.
\newblock \doi{10.1214/19-AOS1827}.

\bibitem[You et~al.(2014)You, Ormerod, and M\"uller]{You2014}
C.~You, J.~T. Ormerod, and S.~M\"uller.
\newblock On variational bayes estimation and variational information criteria
  for linear regression models.
\newblock \emph{Australian \& New Zealand Journal of Statistics}, 56:\penalty0
  73--87, 2014.

\bibitem[Zhang and Zhou(2020)]{Zhang2020}
A.~Y. Zhang and H.~H. Zhou.
\newblock Theoretical and computational guarantees of mean field variational
  inference for community detection.
\newblock \emph{Annals of Statistics}, 48:\penalty0 2575--2598, 2020.

\bibitem[Zhang and Lin(2003)]{Zhang2003}
D.~Zhang and X.~Lin.
\newblock Hypothesis testing in semiparametric additive mixed models.
\newblock \emph{Biostatistics}, 4:\penalty0 57--74, 2003.

\bibitem[Zhang et~al.(2000)Zhang, Lin, and Sowers]{Zhang2000}
D.~Zhang, X.~Lin, and M.~Sowers.
\newblock Semiparametric regression for periodic longitudinal hormone data from
  multiple menstrual cycles.
\newblock \emph{Biometrics}, 56:\penalty0 31--39, 2000.

\end{thebibliography}

\end{document}